\DeclareSIUnit{\byte}{B}
\definecolor{lightgray}{gray}{0.83}
\let\NAT@parse\undefined
\newcommand{\gvec}{\mathbf{g}}
\newcommand{\rvec}{\mathbf{r}}
\newcommand{\uvec}{\mathbf{u}}
\newcommand{\yvec}{\mathbf{y}}
\newcommand{\partialfrac}[2]{\frac{\partial #1}{\partial #2}}
\newcommand{\jacobian}[2]{\partialfrac{#1}{#2}}
\newcommand{\norm}[1]{\left\lVert#1\right\rVert}
\DeclareMathOperator*{\RMS}{RMS}
\newcommand{\real}{\mathbb{R}}
\newcommand{\integer}{\mathbb{Z}}
\definecolor{revyellow}{RGB}{150,150,0}
\definecolor{revblue}{RGB}{0,0,220}
\definecolor{revorange}{RGB}{200,100,0}
\definecolor{revgreen}{RGB}{0,135,0}
\definecolor{revpurp}{RGB}{150,0,150}
\theoremstyle{definition}
\newtheorem{definition}{Definition}
\theoremstyle{plain}
\newtheorem{theorem}{Theorem}
\newtheorem{corollary}{Corollary}[theorem]
\newtheorem{lemma}{Lemma}
\theoremstyle{remark}
\newtheorem{remark}{Remark}
\newcommand{\eqlabel}[1]{\addtocounter{equation}{-1}\refstepcounter{equation}\label{#1}}
\newcounter{subeq}
\newcommand{\suspend}[1]{
\newcounter{#1}\setcounter{#1}{\value{enumi}}
}
\newcommand{\resume}[1]{
\setcounter{enumi}{\value{#1}}
}
\newcommand{\sysDef}{(\ref{eq:sysDefPWA})}
\newcommand{\baseAssumptions}{\ref{C:reachable}-\ref{C:muc}}
\newcommand{\controlModel}{(\ref{eq:monox})-(\ref{eq:monoy})}
\renewcommand*\env@matrix[1][\arraystretch]{%
  \edef\arraystretch{#1}%
  \hskip -\arraycolsep
  \let\@ifnextchar\new@ifnextchar
  \array{*\c@MaxMatrixCols c}}
\newcommand{\nonmin}{non-minimum}
\newcommand{\ILILC}{ILILC}
\newcommand{\PWA}{PWA}
\newcommand{\NMP}{NMP}
\newcommand{\offsetVec}{\beta}
\newcommand{\state}{x}
\newcommand{\uMIMO}{u}
\newcommand{\yMIMO}{y}
\newcommand{\region}{Q}
\newcommand{\regionQuant}{{|\region|}}
\newcommand{\learnMat}{L}
\newcommand{\tdx}{\ell}
\newcommand{\idx}{\kappa} 
\newcommand{\xmod}{\hat{x}}
\newcommand{\ymod}{\hat{y}}
\newcommand{\yvecmod}{\hat{\yvec}}
\newcommand{\gmod}{\hat{\gvec}}
\newcommand{\force}{c}
\newcommand{\stab}{{\mathscr{s}}}
\newcommand{\unstab}{{\mathscr{u}}}
\newcommand{\kappadx}{_\kappa}
\newcommand{\kdxplus}[1]{_{k+#1}}
\newcommand{\kdxminus}[1]{_{k-#1}}
\newcommand{\kdx}{_k}
\newcommand{\dmu}{\mu_g} 
\newcommand{\Qset}{\region}
\newcommand{\Asum}{\mathbf{A}}
\newcommand{\Bsum}{\mathbf{B}}
\newcommand{\Csum}{\mathbf{C}}
\newcommand{\Dsum}{\mathbf{D}}
\newcommand{\Fsum}{\mathbf{F}}
\newcommand{\Gsum}{\mathbf{G}}
\newcommand{\Msum}{\mathbf{M}}
\newcommand{\sumqQ}{\sum_{q=1}^{\regionQuant}}
\newcommand{\locvec}{\delta}
\newcommand{\sigvec}{\delta^*}
\newcommand{\sigvecset}{\Delta^*}
\newcommand{\offsetvec}{\offsetVec}
\newcommand{\locvecmod}{\hat{\locvec}}
\newcommand{\sigvecsetmod}{\hat{\Delta}^*}
\newcommand{\inv}{\overline}
\newcommand{\prevC}{\mathcal{C}}
\newcommand{\prevD}{\mathcal{D}}
\newcommand{\prevG}{\mathcal{G}}
\newcommand{\anticaus}{\Psi}
\newcommand{\decoup}{\tilde}
\newcommand{\statedc}{\decoup{\state}}
\newcommand{\Asumdc}{\decoup{\Asum}}
\newcommand{\Bsumdc}{\decoup{\Bsum}}
\newcommand{\Fsumdc}{\decoup{\Fsum}}
\newcommand{\Pdc}{\decoup{P}}
\newcommand{\extracts}{\mathscr{I}^\stab}
\newcommand{\extractu}{\mathscr{I}^\unstab}
\newcommand{\refr}{r}
\newcommand{\controltotal}{\force}
\newcommand{\controlfb}{y}
\newcommand{\Acm}{\hat{A}} 
\newcommand{\Bcm}{\hat{B}}
\newcommand{\Ccm}{\hat{C}}
\newcommand{\Dcm}{\hat{D}}
\newcommand{\xcm}{\hat{x}}
\newcommand{\statemod}{\xmod}
\newcommand{\Asummod}{\hat{\Asum}}
\newcommand{\Bsummod}{\hat{\Bsum}}
\newcommand{\Fsummod}{\hat{\Fsum}}
\newcommand{\Csummod}{\hat{\Csum}}
\newcommand{\Dsummod}{\hat{\Dsum}}
\newcommand{\Gsummod}{\hat{\Gsum}}
\newcommand{\ldx}{\tdx}
\newcommand{\uLift}{\uvec}
\newcommand{\rLift}{\rvec}
\newcommand{\yLift}{\yvec}
\newcommand{\yLiftmod}{\hat{\yLift}}
\newcommand{\gLift}{\gvec}
\newcommand{\ginvLift}{\hat{\gLift}^{-1}}
\newcommand{\AcmC}{\hat{\mathbf{\Asum}}\kdx^C}
\newcommand{\BcmC}{\hat{\mathbf{\Bsum}}\kdx^C}
\newcommand{\CcmC}{\hat{\mathbf{\Csum}}\kdx^C}
\newcommand{\DcmC}{\hat{\mathbf{\Dsum}}\kdx^C}
\def\lownabla{\rule{0pt}{1.75ex}\nabla}
\def\lowP{\rule{0pt}{1.75ex}P}
\newcommand{\gradILCgain}{\gamma_{\lownabla}}
\newcommand{\PILCgain}{\gamma_{\lowP}}
\def\BibTeX{{\rm B\kern-.05em{\sc i\kern-.025em b}\kern-.08em

    T\kern-.1667em\lower.7ex\hbox{E}\kern-.125emX}}
\begin{document}%
\title{%
Stable Inversion of Piecewise Affine Systems%
\\[0.2ex]
\LARGE
with Application to Feedforward and Iterative Learning Control%
}
\author{
Isaac A. Spiegel, %
Nard Strijbosch, %
Robin de Rozario, %
Tom Oomen, %
and Kira Barton
\thanks{%
This work is supported by the National Science Foundation (CAREER Award \#1351469), the U.S. Department of Commerce, National Institute of Standards and Technology (Award 70NANB20H137), and the Netherlands Organization for Scientific Research (research programme VIDI project 15698)
\emph{(Corresponding author: Isaac A. Spiegel)}%
}%
\thanks{%
Isaac A. Spiegel is with the Department of Mechanical Engineering and Kira Barton is with the Departments of Mechanical Engineering and Robotics, University of Michigan, Ann Arbor, MI 48109 USA (e-mail: ispiegel@umich.edu, bartonkl@umich.edu).%
}%
\thanks{%
Nard Strijbosch and Robin de Rozario are with the
Department of Mechanical Engineering, Eindhoven University of Technology, 5612 AZ Eindhoven, the Netherlands
(e-mail: n.strijbosch@gmail.com, robinderozario@gmail.com).
}%
\thanks{%
Tom Oomen is with the Department of Mechanical Engineering, Eindhoven University of Technology, 5612 AZ Eindhoven, the Netherlands, and the Delft Center for Systems and Control, Delft University of Technology, 2628 CD Delft, the Netherlands (e-mail: t.a.e.oomen@tue.nl).
}%
\thanks{%
\copyright2024 IEEE. Personal use of this material is permitted. Permission from IEEE must be obtained for all other uses, 
including reprinting/republishing this material for advertising or promotional purposes, creating new collective works, for resale or redistribution to servers or lists, or reuse of any copyrighted component of this work in other works. This document is an accepted version. Full citation to published version:
I. A. Spiegel, N. Strijbosch, R. d. Rozario, T. Oomen and K. Barton, "Stable Inversion of Piecewise Affine Systems with Application to Feedforward and Iterative Learning Control," in IEEE Transactions on Automatic Control, doi: \href{https://doi.org/10.1109/TAC.2024.3382340}{10.1109/TAC.2024.3382340}
}
}%

\maketitle

\begin{abstract}
Model inversion is a fundamental technique in feedforward control. Unstable inverse models present a challenge in that
useful
feedforward control trajectories cannot be generated by directly propagating them.
Stable inversion is a process for generating useful trajectories from unstable inverses by handling their stable and unstable modes separately.
Piecewise affine (PWA) systems are a popular framework for modeling complicated 
dynamics.
The primary contributions of this article are 
closed-form inverse formulas for a general class of PWA models, and stable inversion methods for these models.
Both contributions leverage closed-form 
model representations 
to prove sufficient conditions for 
solution existence and uniqueness,
and to develop 
solution
computation methods.
The result is implementable feedforward control synthesis from PWA models with either stable or unstable inverses.
In practice, feedforward control alone may yield substantial tracking errors due to mismatch between the known system model and the unknowable complete system physics.
Iterative learning control (ILC) is a technique for achieving robustness to model error in feedforward control.
To
demonstrate the primary contributions' validity and utility,
this article 
also integrates
PWA stable inversion with ILC in simulations 
based on
a physical printhead positioning system.
\end{abstract}

\begin{IEEEkeywords}
Hybrid systems, 
switched systems,
nonlinear systems, 
piecewise affine systems, 
model inversion, 
feedforward control, 
iterative learning control, 
stable inversion, 
nonminimum phase
\end{IEEEkeywords}

\newpage
\section{Introduction}
This article demonstrates for the first time the synthesis of feedforward control and iterative learning control (ILC) from piecewise affine (\PWA) models with unstable 
inverses.
To achieve this end result, 
several fundamental theoretical contributions are made on model inversion.
This section first introduces 
\PWA{} systems, feedforward control, stable inversion, and ILC, 
then gives an overview on the state of the art and its limitations at the intersection of these topics, and finally concretely itemizes the novel contributions of this article.

\PWA{}
systems are a class of 
hybrid
dynamical
system
in which a state space is partitioned into polytopic regions, each of which may be associated with different affine system dynamics \cite{Heemels2001,Sontag1981}\footnote{
Less formally, 
many \PWA{} systems can be conceived simply as traditional state space systems but with piecewise defined equations.
}.
\PWA{} systems have risen to great popularity 
because they ease the rigorous mathematical modeling of systems with behaviors 
too complex to be captured reasonably by traditional frameworks.
Examples include current transformers \cite{Ferrari-Trecate2003}, one-sided spring supports \cite{Bonsel2004}, and additive manufacturing systems \cite{Spiegel2020}.
The mathematical rigor provided by the \PWA{} framework facilitates analysis and control theory development for these systems.

Given a system model mapping from the input to the output, basic feedforward control is 
the
inputting 
of 
a desired output (i.e. reference) into the model's inverse to generate an input 
that is
expected to yield the reference when applied to the original system.
This process
fails when the inverse model is unstable.
Stable models with unstable inverses are frequently called ``\nonmin{} phase'' (\NMP) 
models\footnote{
Considering that the inverse of a single-input-single-output (SISO) system is the reciprocal of its transfer function, an example of an \NMP{} system is a linear discrete-time system with all poles inside the unit circle and at least one zero outside the unit circle.
}.
Study of \NMP{} systems is valued because 
many practical systems feature \NMP{} dynamics, including motor-tachometer assemblies \cite{Awtar2004}, power converters \cite{Tallfors2004,Escobar1999}, and inkjet printheads \cite{Ezzeldin2011}.

Stable inversion is a technique for achieving feedforward control synthesis from \NMP{} models \cite{Zundert2016}, including nonlinear ones \cite{Spiegel2021}.
The stable inversion process can be broken down into four parts.
First, decouple the stable and unstable modes of the unstable inverse 
model\footnote{
For linear state space models, this is done via a change of basis that yields a representation of the system with the state matrix in Jordan form.
}.
Second, propagate the stable modes forward in time from an initial condition in the usual manner.
Third, propagate the unstable modes backwards in time from a prescribed terminal condition. This yields a bounded state trajectory because 
a scalar difference equation that is unstable forwards in time is stable backwards in time\footnote{
As an illustrative example, consider
that 
$x\kdxplus{1}=1.25x\kdx$
is unstable but 
${x\kdx = \frac{1}{1.25}x\kdxplus{1}}$
is a stable propagation from $x\kdxplus{1}$ to $x\kdx$.
}.
Finally, reassemble the stable and unstable mode trajectories into a complete, bounded state trajectory of the inverse model and extract the resultant feedforward input trajectory.

ILC is the application of a learning function to the error trajectory
from a previous attempt at an output reference tracking task to generate the feedforward 
input trajectory for the next attempt. In practice, ILC is typically applied to systems performing a task repetitively in order to compensate for unmodeled dynamics and other repetitive disturbances 
\cite{Bristow2006}\footnote{
ILC can also be used as an alternative to closed-form model inversion 
for the generation of a particular approximate model inverse 
trajectory
by generating the error signal with the model to be inverted \cite{Markusson2001}.
},
to which the above feedforward techniques are sensitive.

The state of the art 
in control with \PWA{} models
mostly focuses on feedback control, with
examples including
stabilizing state feedback control \cite{Mignone2000,Hejri2021}
and model reference adaptive control \cite{DiBernardo2013}.
Being feedback-driven means these
controllers
require real-time sensing and necessarily feature transient errors due to feedback control's reactive nature.
When processes demand low-error 
reference tracking or when real-time sensing is prohibitively challenging,
feedforward control is often desirable.
However, 
feedforward control
has not been
as
thoroughly addressed in the \PWA{} 
literature%
\footnote{
In addition to preventing the control of systems with specific needs for feedforward control, this gap  inhibits the 
use
of existing feedback control theory that requires feedforward 
components. 
For example,
\cite{VandeWouw2008} presents 
an output reference tracking scheme
for 
\PWA{} systems
that uses
both feedback and feedforward control elements, but does not present a method to compute the feedforward 
signal. This limits
application to the synchronization of a 
follower
 system to a
leader
with the feedforward input known in advance.
}.

Recent work has achieved indirect feedforward control of some piecewise defined systems via ILC \cite{Spiegel2019,Strijbosch2020}, but the proposed ILC synthesis 
technique generally
fails when used with \NMP{} models \cite{Spiegel2021}.
Specifically, \NMP{} dynamics 
create a large condition number for a matrix that must be inverted in the learning 
function,
causing large inversion error.
To enable ILC for nonlinear \NMP{} systems
that are \emph{not} piecewise defined,
\cite{Spiegel2021}
produces a new ILC framework---Invert-Linearize ILC (\ILILC)---and incorporates it with stable inversion.
The combination of \ILILC{} and stable inversion is thus promising for the ILC and feedforward control of \PWA{} systems, including those with NMP models. 
However, at present there are two major issues that hamper realization of such control.
First, there does not exist any published theory on stable inversion for \PWA{} systems (or hybrid systems of any kind).
Second, there does not exist sufficient literature on 
the 
conventional
closed-form
\PWA{} system inversion 
prerequisite for stable inversion.
This is the core limitation preventing both \emph{direct} (as opposed to ILC-based) synthesis of feedforward control from all \PWA{} models and \emph{any} synthesis from \NMP{} \PWA{} models.

The most relevant prior art on 
\PWA{} 
model
inversion is 
the
foundational work on 
piecewise linear systems 
\cite{Sontag1981}, which
proposes that these systems are potentially 
invertible-with-delay, and explains the signal time shifting necessary to accommodate this delay.
This serves as a beginning for the concept of the relative degree of a \PWA{}
system\footnote{
The relative degree of a discrete-time system is the number of time steps it takes for an input value to influence the output value; for linear systems this is equal to the number of poles minus the number of zeros.
}.
Further work on the abstract algebra of piecewise linear systems states
that if a piecewise linear system has an inverse, it has a piecewise linear inverse \cite{Sontag1982}\footnote{
Also provided are the facts 
that it is decidable whether two sets are isomorphic under piecewise linear functions, and that 
the class of
piecewise linear functions 
is
closed under composition.
}.
There are several key limitations in this prior art on \PWA{} model inversion.
First, 
the relative-degree-related concepts
require further development 
for \PWA{} systems
to account for the facts that 
a \PWA{} system's apparent relative degree
may change during switching between component models and that \PWA{} systems may have multiple inverses. 
Second, 
\cite{Sontag1981}
leaves the derivation of the inverse dynamics other than the delay open as a ``nontrivial part'' of the inversion process. Third, for reference tracking it is desirable to invert a system \emph{without} 
delay\footnote{
Delay-free inversion frequently results in anticausal inverses, but this is no problem for typical feedforward control scenarios where the entire reference is known in advance. An anticausal inverse is one in which future reference values, and not present or past reference values, are required to compute a current input.
}.
Fourth, 
beyond the suggestion of a direction for future work in abstract algebra, 
no information is given on the uniqueness of the inverse, 
which is required for the inverse formula to be explicit rather than implicit.
Finally, this prior art does not address time-varying systems, which must be considered if one is to perform feedforward control of a system also subject to feedback control with a time-varying reference.

In short, 
the lack of general \PWA{} system inversion theory stymies the output reference tracking control of a broad range of practical systems, especially those with NMP models, for which ILC-based workarounds are currently inapplicable. The present work fills this gap by contributing
\begin{enumerate}[label=(C\arabic*),leftmargin=*]
    \item
    \label{contribution:inversion}
    formal inversion formulas for a class of \PWA{} systems,
    \item
    \label{contribution:uniqueness}
    sufficient conditions for inverse uniqueness and the corresponding explicit inverse representations,
    \item
    \label{contribution:SI}
    stable inversion 
    theory and implementation
    methods for \PWA{} models with NMP components, 
    and
    \item
    \label{contribution:sim}
    a 
    physically motivated
    simulation validation of this theory via the \ILILC{} of an NMP \PWA{} system.
\end{enumerate}
The main difficulties 
of
these contributions are
the fact
that piecewise definition has to date inhibited algebraic derivation of 
inverses,
and
the
determination of
how switching analytically influences the 
uniqueness of a \PWA{} model inverse and the ability to decouple its stable and unstable modes.

Section \ref{sec:sysdef} formally defines the class of \PWA{} systems treated in this work and a concept of relative degree for 
them.
Section \ref{sec:exactinverse} 
derives
inverse \PWA{} system formulas and sufficient conditions for inverse uniqueness, 
\ref{contribution:inversion}-\ref{contribution:uniqueness}, via analysis of the system's closed-form representation.
Section \ref{sec:stabinv} presents the stable inversion theory for \PWA{} systems,
\ref{contribution:SI}, 
which uses a similarity transform to isolate the stable/unstable modes, 
and manages switching by first propagating the modes that drive switching and then propagating the remaining modes with switching as an exogenous signal. 
Section \ref{sec:val} presents and discusses the validation of the inversion and stable inversion theory via the 
simulated
\ILILC{} of
a printhead positioning system, the models for which are constructed from historical data of a physical system,
\ref{contribution:sim}.
Finally, Section \ref{sec:conc5} presents conclusions and recommendations for future work.
An exhaustive index of nomenclature can be found in \cite{Spiegel2023}.

\section{System Definition}
\label{sec:sysdef}
A variety of similar 
discrete-time 
\PWA{} system definitions appear in the literature. For the sake of 
familiarity and 
simplicity, the following 
time-varying 
system definition uses a representation similar to \cite{Heemels2001} and \cite{Gao2009}.
\begin{definition}[\PWA{} System] A \PWA{} system is given by
\begin{equation}
\label{eq:PWAdef}
\begin{aligned}
    \state\kdxplus{1} &= A_{q,k}\state\kdx + B_{q,k}\uMIMO\kdx + F_{q,k}
    \\
    \yMIMO\kdx &= C_{q,k}\state\kdx + D_{q,k}\uMIMO\kdx + G_{q,k}
\end{aligned}
\quad \textrm{for }
    \state\kdx 
    \in Q_q
\end{equation}
where $k\in\integer$ is the time-step index, 
$\state\in\real^{n_\state}$, $\uMIMO\in\real^{n_u}$, and $\yMIMO\in\real^{n_y}$ are the state, input, and output vectors,
and $Q_q\in \Qset$ where $\Qset$ is the set of ``locations,'' i.e. a set of disjoint 
regions with union equal to 
$\real^{n_x}$.
Each location $Q_q$ is the union of a set of disjoint 
convex
polytopes.
Here, a
convex
polytope is defined simply as an intersection of half spaces.
Additionally, let the relative degree of the $q$\textsuperscript{th} component model be denoted $\mu_q$ for $q\in\llbracket\, 1,\regionQuant\,\rrbracket$ 
($\llbracket$ $\rrbracket$ indicates a closed 
interval
of integers, 
$|$ $|$ indicates cardinality when applied to a set).
The matrices
$A_{q,k}$, $B_{q,k}$, $F_{q,k}$, $C_{q,k}$, $D_{q,k}$, $G_{q,k}$ 
are 
always 
real.
\label{def:PWAsys}
\end{definition}

To facilitate both mathematical analysis and controller synthesis, the remainder of the 
article 
uses the equivalent closed-form representation
\begin{IEEEeqnarray}{RL}
\eqlabel{eq:sysDefPWA}
\IEEEyesnumber
\IEEEyessubnumber
\label{eq:sysDef_x}
    \state\kdxplus{1} &= 
    \Asum\kdx \state\kdx + \Bsum\kdx u\kdx + \Fsum\kdx
\\
    y\kdx &= \Csum\kdx\state\kdx + \Dsum\kdx u\kdx + \Gsum\kdx
\IEEEyessubnumber 
\label{eq:sysDef_y} 
\end{IEEEeqnarray}
with the upright, bold, capital letter notation defined as
\begin{IEEEeqnarray}{RL}
\Msum\kdx &\coloneq \sumqQ M_{q,k} K_q( \locvec\kdx )
\label{eq:Mshorthand}
\\
K_q(\locvec\kdx) &\coloneq 0^{
\prod_{i=1}^{|\sigvecset_q|}
\norm{\sigvec_{q,i} - \locvec\kdx}
} = 
\begin{cases}
1 & \locvec\kdx\in\sigvecset_q
\\
0 & \textrm{otherwise}
\end{cases}
\label{eq:selector}
\\
\locvec\kdx=\locvec(\state\kdx) &\coloneqq H\left( P\state\kdx - \offsetvec \right)
\label{eq:localization}
\end{IEEEeqnarray}
where $\Msum$ and $M_{q,k}$ stand in for any of
$\{\Asum,\Bsum,\Fsum,\Csum,\Dsum,\Gsum\}$,
and 
$\{ A_{q,k},\allowbreak B_{q,k},\allowbreak F_{q,k},\allowbreak C_{q,k},\allowbreak D_{q,k},\allowbreak G_{q,k} \}$, respectively.
$H$ is the Heaviside step function evaluated element-wise on its vector argument,
$K_q$ is the 
selector function for the $q$\textsuperscript{th} location,
$\sigvecset_q=\{\sigvec_{q,1},\allowbreak\,\sigvec_{q,2},\allowbreak\,\allowbreak\cdots\}$ is the set of binary vector signatures of the $q$\textsuperscript{th} location,
$P\in\real^{n_P\times n_\state}$ is a matrix 
consisting
of concatenated hyperplane orientation vectors, and $\offsetvec\in\real^{n_P}$ is a vector of hyperplane offsets. 
For more information about closed form representations of piecewise defined systems, 
see \cite{Spiegel2019}.

Furthermore, this article assumes:
\begin{enumerate}[label=(A\arabic*),leftmargin=*]
    \item
    \label{C:reachable}
    $\state_0 \in X_0$, where $X_0$ is the set of initial conditions from which all locations $Q_q \in Q$ are reachable in finite 
    time;
    \item 
    \label{C:SISO}
    the system is 
    SISO, $n_u=n_y=1$;
    \item
    \label{C:stateswitch}
    switching depends only on the states, not the input; and
    \item
    \label{C:muc}
    all component models have the same relative degree, $\mu_c$, for all time,
    $\mu_q=\mu_c$ $\forall q\in \llbracket 1,\regionQuant\rrbracket$ and $\forall k$.
    \suspend{listA}
\end{enumerate}

While
assumption \ref{C:stateswitch} is implied by (\ref{eq:localization}), the other assumptions are not implied by the system representation 
\sysDef-(\ref{eq:localization}).

Finally, this work introduces the concept of the 
global dynamical relative degree:
\begin{definition}[Global Dynamical Relative Degree]
\label{def:mu}
The \emph{global dynamical relative degree} of a SISO \PWA{} system
is the smallest number $\dmu\geq 0$ such that 
the explicit expression of $y_{k+\dmu}$ in terms of component state space matrices, selector functions, $\state_k$, and $u_i, \,i\geq k$ contains $u_k$ outside of a selector function for all switching sequences on the interval $\llbracket k,k+\dmu \rrbracket$.
\end{definition}

This $\dmu$ is essentially the traditional relative degree, but neglecting inputs appearing in selector functions. This neglect is introduced to avoid situations in which the only explicit appearance of the input in the output function is within the Heaviside function, leading to a potentially infinite number of input values yielding the same output.
Such non-injectiveness would make inversion unusually challenging. The assumption \ref{C:stateswitch} is made for similar reasons.

\section{Conventional Inversion of \PWA{} Systems}
\label{sec:exactinverse}
This section presents the conventional exact inverses of \PWA{} systems under assumptions \baseAssumptions. 
Conventional inversion is the process of
\begin{enumerate}[label=(Step \arabic*),leftmargin=*]
    \item
    \label{step1}
    obtaining an expression for the previewed output $y\kdxplus{\dmu}$ in terms of $\state\kdx$, $\Msum\kdxplus{\idx}$, and $u\kdxplus{\idx}$ where $\idx\geq 0$,
    \item
    \label{step2}
    solving the previewed output equation for $u\kdx$ in terms of $\state\kdx$, $\Msum\kdxplus{\idx}$, and $u\kdxplus{\idx+1}$, and finally
    \item
    \label{step3}
    taking this expression of $u\kdx$ as the output function of the inverse system, and plugging it in to 
    (\ref{eq:sysDef_x}) to obtain the state transition formula of the inverse system.
\end{enumerate}
Note that $y\kdxplus{\dmu}$ is necessarily an explicit function of $u\kdx$ by Definition \ref{def:mu}.

For systems with $\dmu=0$, this inverse system is unique, and can be expressed explicitly. For systems with $\dmu\geq1$, there may be multiple solutions to the problem of solving $y\kdxplus{\dmu}$ for $u\kdx$ \ref{step2}, and thus the inverse system cannot be expressed explicitly without additional assumptions. This section gives both the general, implicit inverse system for $\dmu\geq 1$ systems and sufficient conditions for the uniqueness of system inversion for $\dmu\in\{1,2\}$ systems along with the corresponding explicit inverse systems.

\subsection{Unique Exact Inversion For 
\texorpdfstring{$\dmu=0$}{Relative Degree 0}
}

\begin{lemma}[Relative Degree of 0]
\label{lem:mu0}
The global dynamical relative degree of a reachable SISO \PWA{} system, i.e. a \PWA{} system satisfying \ref{C:reachable} and \ref{C:SISO}, is 0 if and only if the relative degree of all component models are 0 for all time:
\begin{equation}
\label{eq:mu0}
\mu_q=0 \,\,\,\, \forall q\in \llbracket 1, \regionQuant \rrbracket \text{, } \forall k \iff \dmu=0
\end{equation}
\end{lemma}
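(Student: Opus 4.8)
The plan is to exploit the closed-form output equation (\ref{eq:sysDef_y}), $\yMIMO\kdx = \Csum\kdx\state\kdx + \Dsum\kdx \uMIMO\kdx + \Gsum\kdx$, and to show that the global dynamical relative degree is $0$ precisely when the coefficient $\Dsum\kdx$ of $\uMIMO\kdx$ never vanishes along reachable trajectories. The key structural observation is that the selector functions $K_q$ in (\ref{eq:selector}) partition the state space: because the locations are disjoint with union $\real^{n_\state}$, for any $\state\kdx$ exactly one $K_q(\locvec\kdx)$ equals $1$. Hence $\Dsum\kdx = \sumqQ D_{q,k} K_q(\locvec\kdx) = D_{q^*,k}$, where $q^*$ is the unique active location containing $\state\kdx$. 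Since a component model has relative degree $0$ exactly when its feedthrough term is nonzero (i.e.\ $D_{q,k}\neq 0$), the lemma reduces to relating ``$D_{q,k}\neq 0$ for all $q,k$'' to ``$\Dsum\kdx \neq 0$ for all reachable $\state\kdx$ and all $k$''.

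First I would establish that $\uMIMO\kdx$ enters $\yMIMO\kdx$ \emph{only} through the term $\Dsum\kdx\uMIMO\kdx$. By (\ref{eq:sysDef_x}) the state $\state\kdx$ is a function of $\state_0$ and $\uMIMO_i$ for $i<k$ only, so it does not depend on $\uMIMO\kdx$; and by assumption \ref{C:stateswitch} together with (\ref{eq:localization}), the localization $\locvec\kdx$---and therefore every selector $K_q(\locvec\kdx)$ appearing in $\Csum\kdx$, $\Dsum\kdx$, $\Gsum\kdx$---depends on $\state\kdx$ but not on $\uMIMO\kdx$. Consequently $\yMIMO\kdx$ contains $\uMIMO\kdx$ outside a selector function if and only if $\Dsum\kdx \neq 0$, which by the partition argument holds iff the active $D_{q^*,k}\neq 0$.

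For the forward implication ($\mu_q = 0\,\forall q,k \Rightarrow \dmu = 0$), I would argue that if every $D_{q,k}\neq 0$ then for any state $\state\kdx$ the active feedthrough $\Dsum\kdx = D_{q^*,k}$ is nonzero, so $\uMIMO\kdx$ appears explicitly in $\yMIMO\kdx$ for every switching sequence; by Definition \ref{def:mu} this forces $\dmu = 0$. For the converse I would argue the contrapositive: if some component model fails to have relative degree $0$, i.e.\ $D_{q_0,k_0}=0$ for some location $Q_{q_0}$ and time $k_0$, then reachability \ref{C:reachable} guarantees a trajectory placing $\state_{k_0}\in Q_{q_0}$, along which $\Dsum_{k_0} = D_{q_0,k_0}=0$ and hence $\uMIMO_{k_0}$ is absent from $\yMIMO_{k_0}$; this exhibits a switching sequence violating the $\dmu=0$ condition, so $\dmu\neq 0$.

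The main obstacle is the converse direction, specifically ensuring that the ``bad'' location $Q_{q_0}$ is genuinely realizable as an active location at the relevant time step so that it counts as an admissible switching sequence---this is exactly where assumption \ref{C:reachable} is indispensable, since an unreachable location with zero feedthrough would not obstruct $\dmu = 0$. The remaining care is in the time-varying bookkeeping: making precise that ``reachable in finite time'' supplies a reachable state in $Q_{q_0}$ at the specific time $k_0$ where the feedthrough degenerates, and in confirming that no $\uMIMO_{k_0}$ dependence is hidden inside the Heaviside selectors, which is precisely what \ref{C:stateswitch} rules out.
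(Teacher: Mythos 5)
Your proposal is correct and follows essentially the same route as the paper's proof: a direct forward implication ($D_{q,k}\neq 0$ for all $q,k$ forces $\Dsum\kdx\neq 0$, hence $u\kdx$ appears explicitly in (\ref{eq:sysDef_y})), and a contrapositive for the converse that uses reachability \ref{C:reachable} to realize the location with vanishing feedthrough so that $y\kdx$ loses its explicit $u\kdx$ dependence. Your additional observations---that exactly one selector is active by the partition structure, and that \ref{C:stateswitch} prevents $u\kdx$ from hiding inside the Heaviside selectors---are sound elaborations of steps the paper leaves implicit, not a different argument.
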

\begin{proof}
First, the forward implication is proven directly.
{\allowdisplaybreaks%
\begin{align}
    \mu_q=0 \,\,\,\, 
    \forall q,k
    &\implies 
    D_{q,k} \neq 0 \,\,\,\, 
    \forall q,k
    \\
    D_{q,k} \neq 0 \,\,\,\, 
    \forall q,k
    &\implies
    \Dsum\kdx  \neq 0 \,\,\,\, \forall k
    \\
    \Dsum\kdx  \neq 0 \,\,\,\, \forall k & 
    \implies
    \text{
    $u\kdx$ is explicitly in (\ref{eq:sysDef_y}) $\forall k$
    }
    \\
    \therefore \,\, 
    \mu_q=0 \,\, \forall \,\, 
    &q\in Q 
    \implies 
    \dmu=0
    \label{eq:mu0direct}
\end{align}
}

Now the 
backwards implication is proven by proving the contrapositive.
\begin{align}
    \exists q,k \textrm{ s.t. } \mu_q \neq 0
    &\implies
    \exists q,k \textrm{ s.t. } D_{q,k} = 0
\end{align}
By \ref{C:reachable}, there exists some finite sequence of inputs to bring the system to this location $\Qset_q$ at some time step $k$ such that $\Dsum\kdx  = D_{q,k} = 0$, and  (\ref{eq:sysDef_y}) becomes
\begin{equation}
    y\kdx = \Csum\kdx \state\kdx + \Gsum\kdx
\end{equation}
which is not an explicit function of $u\kdx $, meaning $\dmu \neq 0$.
\begin{equation}
\label{eq:mu0contrapositive}
    \therefore 
    \neg \left(
    \mu_q=0 \,\,\,\, \forall q\in \llbracket 1,\regionQuant \rrbracket, \,\,\forall k
    \right)
    \implies
    \neg \left(
    \dmu=0
    \right)
\end{equation}
where $\neg$ is logical negation.
By (\ref{eq:mu0direct}) and (\ref{eq:mu0contrapositive}), (\ref{eq:mu0}) 
is
true.
\end{proof}

\begin{theorem}[$\dmu=0$ \PWA{} System Inverse]
\label{thm:mu0inv}
The inverse of a \PWA{} system satisfying \baseAssumptions{} with $\mu_c=0$ is itself a \PWA{} system satisfying \baseAssumptions{} and is given by
\begin{IEEEeqnarray}{RL}
\IEEEyesnumber
    \state\kdxplus{1} &= 
    \inv{\Asum}\kdx \state\kdx  + \inv{\Bsum}\kdx y\kdx  + \inv{\Fsum}\kdx 
\IEEEyessubnumber
    \\
    u\kdx  &= \inv{\Csum}\kdx  \state\kdx  + \inv{\Dsum}\kdx  y\kdx  + \inv{\Gsum}\kdx
\IEEEyessubnumber
\end{IEEEeqnarray}
where
\begin{equation*}
\begin{aligned}
    \inv{\Asum}\kdx  &= \Asum\kdx +\Bsum\kdx \inv{\Csum}\kdx 
    & & &
    \inv{\Bsum}\kdx  &= \Bsum\kdx \inv{\Dsum}\kdx
    & & &
    \inv{\Fsum}\kdx  &= \Fsum\kdx +\Bsum\kdx \inv{\Gsum}\kdx 
    \\
    \inv{\Csum}\kdx  &= -\inv{\Dsum}\kdx \Csum\kdx
    & & &
    \inv{\Dsum}\kdx  &= \Dsum^{-1}\kdx
    & & &
    \inv{\Gsum}\kdx  &= -\inv{\Dsum}\kdx  \Gsum\kdx 
\end{aligned}
\end{equation*}
such that $y\kdx$ is the inverse system's input and $u\kdx $ is the output.
\end{theorem}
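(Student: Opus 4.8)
The plan is to execute the three-step conventional inversion procedure \ref{step1}--\ref{step3} directly on the closed-form representation \sysDef{}, and then to certify that the resulting object is again a \PWA{} system obeying \baseAssumptions{}. The enabling fact is Lemma \ref{lem:mu0}: since $\mu_c=0$ forces $D_{q,k}\neq 0$ for every $q$ and $k$, the scalar $\Dsum\kdx$ is nonzero for all $k$, so the output equation (\ref{eq:sysDef_y}) can be solved for $u\kdx$ without any case analysis and with no division-by-zero concern.

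For \ref{step1}--\ref{step2}, because $\dmu=0$ the ``previewed'' output is just $y\kdx$ itself, and rearranging (\ref{eq:sysDef_y}) gives $u\kdx=\Dsum\kdx^{-1}\bigl(y\kdx-\Csum\kdx\state\kdx-\Gsum\kdx\bigr)$. Reading off the coefficients produces the claimed output map of the inverse, with $\inv{\Dsum}\kdx=\Dsum\kdx^{-1}$, $\inv{\Csum}\kdx=-\inv{\Dsum}\kdx\Csum\kdx$, and $\inv{\Gsum}\kdx=-\inv{\Dsum}\kdx\Gsum\kdx$. For \ref{step3} I would substitute this expression for $u\kdx$ into the state update (\ref{eq:sysDef_x}) and collect the $\state\kdx$, $y\kdx$, and constant terms, yielding $\inv{\Asum}\kdx=\Asum\kdx+\Bsum\kdx\inv{\Csum}\kdx$, $\inv{\Bsum}\kdx=\Bsum\kdx\inv{\Dsum}\kdx$, and $\inv{\Fsum}\kdx=\Fsum\kdx+\Bsum\kdx\inv{\Gsum}\kdx$, exactly the stated matrices. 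This portion is routine scalar algebra.

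The step I expect to require the most care is verifying that the inverse genuinely inherits the closed-form \PWA{} structure (\ref{eq:Mshorthand})--(\ref{eq:localization}) together with \baseAssumptions{}, rather than being merely some opaque state-dependent map. The key tool is the orthogonality of the selector functions implied by (\ref{eq:selector}): since the locations are disjoint, $K_q(\locvec\kdx)K_{q'}(\locvec\kdx)=0$ for $q\neq q'$ and $K_q(\locvec\kdx)^2=K_q(\locvec\kdx)$. Consequently $\inv{\Dsum}\kdx=\Dsum\kdx^{-1}=\sumqQ D_{q,k}^{-1}K_q(\locvec\kdx)$, which is well defined precisely because Lemma \ref{lem:mu0} guarantees each $D_{q,k}\neq 0$, and every product such as $\Bsum\kdx\inv{\Csum}\kdx$ collapses to a single sum $\sumqQ B_{q,k}\inv{C}_{q,k}K_q(\locvec\kdx)$. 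Hence each inverse matrix is of the form (\ref{eq:Mshorthand}) with component matrices $\inv{A}_{q,k},\dots,\inv{G}_{q,k}$ given by the same formulas applied componentwise, so the inverse is \PWA{}. It then remains to confirm \baseAssumptions{}: the inverse keeps $\state$ as its state, so the localization (\ref{eq:localization}) and the switching are unchanged and still state-dependent, giving \ref{C:stateswitch}, and any state trajectory realizable by the original is realized identically by the inverse driven by the corresponding $y$, so \ref{C:reachable} carries over; the inverse has the single input $y\kdx$ and single output $u\kdx$, giving \ref{C:SISO}; and since $\inv{D}_{q,k}=D_{q,k}^{-1}\neq 0$ for all $q,k$, every component of the inverse has relative degree $0$, so by Lemma \ref{lem:mu0} the inverse has $\dmu=0=\mu_c$, establishing \ref{C:muc}.
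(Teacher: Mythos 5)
Your proof is correct and its core is exactly the paper's argument: invoke Lemma \ref{lem:mu0} to get $\Dsum\kdx\neq 0$ for all $k$, solve (\ref{eq:sysDef_y}) for $u\kdx$ as $u\kdx=\Dsum\kdx^{-1}(y\kdx-\Csum\kdx\state\kdx-\Gsum\kdx)$, and substitute into (\ref{eq:sysDef_x}) to read off the inverse matrices. The one place you go beyond the paper is your final paragraph: the paper's proof stops after the algebra and never explicitly verifies the theorem's claim that the inverse ``is itself a \PWA{} system satisfying \baseAssumptions{}.'' Your argument for that part---using the idempotence and mutual orthogonality of the selector functions $K_q$ to show $\Dsum\kdx^{-1}=\sumqQ D_{q,k}^{-1}K_q(\locvec\kdx)$ and to collapse products like $\Bsum\kdx\inv{\Csum}\kdx$ into a single sum of the form (\ref{eq:Mshorthand}), then checking \ref{C:reachable}--\ref{C:muc} one by one (unchanged state and localization for \ref{C:stateswitch}, identical realizable trajectories for \ref{C:reachable}, scalar input/output for \ref{C:SISO}, and $\inv{D}_{q,k}=D_{q,k}^{-1}\neq 0$ giving $\mu_q=0$ uniformly for \ref{C:muc})---is sound, and it makes explicit a structural fact the paper relies on only implicitly. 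In that sense your write-up is more complete than the published proof.
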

\begin{proof}
\ref{step1} is satisfied by (\ref{eq:sysDef_y}) and \ref{step2} 
by
\begin{equation}
    u\kdx  = \Dsum\kdx ^{-1}(y\kdx -\Csum\kdx \state\kdx  - \Gsum\kdx )
    \label{eq:umu0}
\end{equation}
Equation (\ref{eq:umu0}) is always well-defined because by Lemma \ref{lem:mu0} ~$\mu_c=0\implies\dmu=0$, which in turn implies $\Dsum\kdx  \neq 0$ $\forall k$, and because $\Dsum$ is always scalar because the system is SISO by \ref{C:SISO}.
Plugging (\ref{eq:umu0}) into (\ref{eq:sysDef_x}) yields the inverse system state transition formula, satisfying \ref{step3}.
\end{proof}
\subsection{Non-unique Exact Inversion For 
\texorpdfstring{$\dmu\geq 1$}{Relative Degree Greater Than 0}
}
For systems with $\dmu\geq 1$, (\ref{eq:sysDef_y}) does not explicitly contain $u\kdx$ because $\Dsum\kdx = 0$ $\forall k$; this is corollary to Lemma \ref{lem:mu0}, Definition \ref{def:mu}, and \ref{C:muc}. Consequently, \ref{step1} necessitates the derivation of an explicit formula for the output preview $y\kdxplus{\dmu}$, i.e. the output at a time after the current time step $k$. This preview of future output is necessary for deriving an output equation that explicitly depends on the current input $u\kdx$.

\begin{lemma}[$\dmu\geq 1$ \PWA{} System Output Preview]
\label{lem:previewOutput}
Given a \PWA{} system satisfying \baseAssumptions{} with known global dynamical relative degree $\dmu$, the output function with minimum preview such that the function is explicitly dependent on an input term outside of the selector functions for any switching sequence is given by
\begin{equation}
\label{eq:previewOutput}
    y\kdxplus{\dmu} = \prevC\kdx \state\kdx + \prevD\kdx u\kdx + \prevG\kdx + \anticaus\kdx(u\kdxplus{1},\cdots,u\kdxplus{\dmu-1})
\end{equation}
with
{\begin{align*}
&\prevC\kdx \coloneq
\Csum\kdxplus{\dmu} \left(\prod_{m=0}^{\dmu-1}\Asum\kdxplus{m}\right)
\qquad
\prevD\kdx \coloneq
\Csum\kdxplus{\dmu}
    \left( \prod_{m=1}^{\dmu-1} \Asum\kdxplus{m} \right) \Bsum\kdx
\\
&\prevG\kdx \coloneq
\Csum\kdxplus{\dmu} \sum_{s=0}^{\dmu-1} \left(
    \left( \prod_{m=s+1}^{\dmu-1} \Asum\kdxplus{m} \right)
    \Fsum\kdxplus{s}
    \right) + \Gsum\kdxplus{\dmu}
\\
&\anticaus\kdx(u\kdxplus{1},\cdots)
\coloneq
\Csum\kdxplus{\dmu}
    \sum_{s=1}^{\dmu-1} 
    \left(
    \left( \prod_{m=s+1}^{\dmu-1} \Asum\kdxplus{m} \right) 
    \Bsum\kdxplus{s} u\kdxplus{s}
    \right)
\end{align*}}
where the factors in the products generated by $\prod$ are ordered sequentially by the index $m$. The factor corresponding to the greatest value of the index must be on the left, and the factor corresponding to the smallest value of the index must be on the right. 
For example,
\begin{equation}
    \prod_{m=0}^{2}\Asum\kdxplus{m} \equiv \Asum\kdxplus{2}\Asum\kdxplus{1}\Asum\kdxplus{0} 
    \not\equiv \Asum\kdxplus{0}\Asum\kdxplus{1}\Asum\kdxplus{2}
\end{equation}
because matrices do not necessarily commute.
Additionally, if the lower bound on the product index exceeds the upper bound on the product index (an ``empty product''), then the product resolves to the identity matrix. Similarly, empty sums resolve to $0$. 
For example
\begin{align}
    \prod_{m=1}^{0}\textup{anything} \equiv I
    \qquad \qquad
    \sum_{s=1}^{0}\textup{anything} \equiv 0
\end{align}
\end{lemma}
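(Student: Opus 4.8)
The plan is to establish the formula by propagating the state-transition equation (\ref{eq:sysDef_x}) forward $\dmu$ steps along a fixed switching sequence and then substituting the result into the output equation (\ref{eq:sysDef_y}). First I would invoke the corollary noted in the text (following from Lemma \ref{lem:mu0}, Definition \ref{def:mu}, and \ref{C:muc}) that $\dmu\geq 1$ forces $\Dsum\kdx = 0$ for all $k$, so that the output equation reduces to $y\kdxplus{\dmu} = \Csum\kdxplus{\dmu}\state\kdxplus{\dmu} + \Gsum\kdxplus{\dmu}$ with no direct feedthrough of $u\kdxplus{\dmu}$. The entire dependence of $y\kdxplus{\dmu}$ on the inputs $u\kdx,\ldots,u\kdxplus{\dmu-1}$ therefore enters only through the propagated state $\state\kdxplus{\dmu}$.

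The core step is a short induction on the number of propagation steps establishing the time-varying solution
\[
\state\kdxplus{\dmu} = \left(\prod_{m=0}^{\dmu-1}\Asum\kdxplus{m}\right)\state\kdx + \sum_{s=0}^{\dmu-1}\left(\prod_{m=s+1}^{\dmu-1}\Asum\kdxplus{m}\right)\left(\Bsum\kdxplus{s}u\kdxplus{s} + \Fsum\kdxplus{s}\right),
\]
where the products use the left-to-right descending ordering convention of the lemma and empty products resolve to $I$. The base case $\dmu=1$ is (\ref{eq:sysDef_x}) itself; the inductive step left-multiplies the $n$-step expression by $\Asum\kdxplus{n}$ and adds $\Bsum\kdxplus{n}u\kdxplus{n} + \Fsum\kdxplus{n}$, after which reindexing the products and the affine sum reproduces the $(n+1)$-step form. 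Substituting this into $y\kdxplus{\dmu} = \Csum\kdxplus{\dmu}\state\kdxplus{\dmu} + \Gsum\kdxplus{\dmu}$ and distributing $\Csum\kdxplus{\dmu}$ yields four groups of terms: the coefficient of $\state\kdx$ is exactly $\prevC\kdx$; peeling off the $s=0$ summand of the input sum gives the coefficient $\Csum\kdxplus{\dmu}(\prod_{m=1}^{\dmu-1}\Asum\kdxplus{m})\Bsum\kdx = \prevD\kdx$ of $u\kdx$; the affine terms collect to $\prevG\kdx$; and the remaining input summands for $s=1,\ldots,\dmu-1$ collect to $\anticaus\kdx$. I would verify the $\dmu=1$ edge case explicitly, where the empty-sum convention makes $\anticaus\kdx=0$ and the empty-product convention makes $\prevD\kdx = \Csum\kdxplus{1}\Bsum\kdx$, consistent with $y\kdxplus{1} = \Csum\kdxplus{1}\state\kdxplus{1} + \Gsum\kdxplus{1}$.

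The subtlety I would be most careful about is that, unlike a genuine linear time-varying system, the matrices $\Asum\kdxplus{m}$, $\Bsum\kdxplus{m}$, etc. are not exogenous: through the selector functions $K_q$ and the localization map $\locvec(\state\kdx)$ they depend on the intermediate states $\state\kdxplus{m}$, which in turn depend on past inputs. The derivation above is therefore to be read along one fixed switching sequence on $\llbracket k, k+\dmu\rrbracket$, so that each $\Msum\kdxplus{m}$ denotes the single active component matrix $M_{q^*,k+m}$ for the location $q^*$ visited at that step; the closed-form representation (\ref{eq:Mshorthand})--(\ref{eq:localization}) is precisely what lets me write the propagation uniformly across sequences. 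Because the identity then holds for every admissible switching sequence with the matrices interpreted this way, and because $\dmu$ is by Definition \ref{def:mu} the \emph{smallest} preview at which $u\kdx$ appears outside a selector function for all sequences, the displayed $y\kdxplus{\dmu}$ is indeed the minimum-preview output function. I expect the main obstacle to be bookkeeping rather than conceptual: getting the non-commutative product ordering and the index bounds of the two nested products exactly right, and confirming that the $s=0$ term peels off cleanly as $\prevD\kdx u\kdx$ while the empty-product and empty-sum conventions cover the boundary summands.
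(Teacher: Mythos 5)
Your proof is correct, and it reaches the same formula by the same underlying algebra (forward propagation of the state and substitution into the output equation), but you organize the induction differently from the paper. The paper inducts directly on the preview horizon of the output formula (\ref{eq:previewOutput}) itself: the inductive step shifts the $\dmu=\nu$ formula forward one time step, reindexes the products and sums, and then expands $\state\kdxplus{1}$ via (\ref{eq:sysDef_x}), with Definition \ref{def:mu} invoked inside the step to justify that this expansion is necessary (there exists a switching sequence for which every input coefficient attached to $\state\kdxplus{1}$ vanishes, so $u\kdx$ must be exposed explicitly). You instead factor the argument into two cleaner pieces: an induction on the number of propagation steps establishing the time-varying state-transition solution for $\state\kdxplus{\dmu}$, followed by a single substitution into $y\kdxplus{\dmu} = \Csum\kdxplus{\dmu}\state\kdxplus{\dmu} + \Gsum\kdxplus{\dmu}$ (valid since $\dmu\geq 1$ forces $\Dsum\kdx=0$), after which $\prevC\kdx$, $\prevD\kdx$, $\prevG\kdx$, and $\anticaus\kdx$ are read off by grouping terms. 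Your decomposition is more modular and avoids the slightly awkward feature of the paper's induction that the inductive variable is nominally a fixed property of the system; notably, the state-propagation identity you prove as your core lemma appears in the paper only later, as an unproved display inside Remark 1. What the paper's organization buys in exchange is that the minimality/explicitness reasoning is woven into each inductive step, whereas you defer it entirely to a terminal appeal to Definition \ref{def:mu} --- which is legitimate, since minimality of $\dmu$ is literally definitional once the identity is established, but it is worth stating (as the paper does for the base case, and as Theorem \ref{thm:genInv} later relies on) that Definition \ref{def:mu} is also what guarantees the coefficient $\prevD\kdx$ is nonzero for every switching sequence, so that $u\kdx$ genuinely appears in the derived expression. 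Your handling of the switching subtlety --- reading the derivation along a fixed admissible switching sequence, with the closed-form representation (\ref{eq:Mshorthand})--(\ref{eq:localization}) making the bookkeeping uniform --- matches the paper's implicit treatment.
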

\begin{proof}
Let the base case of the proof by induction be $\dmu=1$ such that
\begin{equation}
\label{eq:BaseCase}
    y\kdxplus{1} = \Csum\kdxplus{1} \left(
    \Asum\kdx  \state\kdx 
    +
    \Fsum_{k}
    +
    \Bsum\kdx  u\kdx 
    \right)
    +
    \Gsum\kdxplus{1}
\end{equation}
which is achieved equivalently from (\ref{eq:previewOutput}) and from the system definition by plugging (\ref{eq:sysDef_x}) into (\ref{eq:sysDef_y}) incremented by one time step (i.e. plugging the equation for $\state\kdxplus{1}$ into the equation for $y\kdxplus{1}$).
The preview is minimal because, by Definition \ref{def:mu}, $\Csum\kdxplus{1}\Bsum\kdx \neq 0$ for all switching sequences on $\llbracket k,k+1 \rrbracket$, and by Lemma \ref{lem:mu0} $\dmu\geq 1 \implies \Dsum\kdx =0$ $\forall k$. 
In other words, for $\dmu=1$, (\ref{eq:BaseCase}) is always an explicit function of $u\kdx$. Thus, regardless of switching sequence, outputs further in the future, such as $y\kdxplus{2}$, never need to be considered in order to explicitly relate the current input to an output.

Then consider (\ref{eq:previewOutput}) with $\dmu=\nu$ as the foundation of the induction step. To prove (\ref{eq:previewOutput}) holds for $\dmu=\nu+1$, first increment (\ref{eq:previewOutput}) with $\dmu=\nu$ by one time step, yielding
\begin{align}
    \nonumber
    y\kdxplus{\nu+1} 
    &
    =
    \Csum\kdxplus{\nu+1}\left( \prod_{m=0}^{\nu-1}\Asum\kdxplus{1+m}   \right)\state\kdxplus{1}
    \\
    \nonumber
    & \quad
    +
    \Csum\kdxplus{\nu+1}\left(\prod_{m=1}^{\nu-1}\Asum\kdxplus{1+m}\right)\Bsum\kdxplus{1}u\kdxplus{1}
    \\
    \nonumber
    & \quad
    +
    \Csum\kdxplus{\nu+1}\sum_{s=0}^{\nu-1}\left( \left(\prod_{m=s+1}^{\nu-1}\Asum\kdxplus{1+m}\right)\Fsum\kdxplus{1+s} \right) + \Gsum\kdxplus{1+\nu}
    \\
    & \quad
    +
    \Csum\kdxplus{\nu+1}\sum_{s=1}^{\nu-1}\left(\left(\prod_{m=s+1}^{\nu-1}\Asum\kdxplus{1+m}\right) \Bsum\kdxplus{1+s}u\kdxplus{1+s}\right)
\end{align}
This 
is
simplified by first factoring out $\Csum\kdxplus{\nu+1}$ and adjusting the product $\Pi$ indices to subsume the constant $+1$ in $\Asum\kdxplus{1+m}$:
\begin{align}
\hspace{-0.02in}
    \nonumber
    y\kdxplus{\nu+1} 
    &
    =
    \Csum\kdxplus{\nu+1}
    \left(\left( \prod_{m=1}^{\nu}\Asum\kdxplus{m}   \right)\state\kdxplus{1}
    +
    \left(\prod_{m=2}^{\nu}\Asum\kdxplus{m}\right)\Bsum\kdxplus{1}u\kdxplus{1}
    \right.
    \\
    \nonumber
    & \quad
    \left.
    +
    \sum_{s=0}^{\nu-1}\left( \left(\prod_{m=s+2}^{\nu}\Asum\kdxplus{m}\right)\Fsum\kdxplus{1+s} \right) 
    \right.
    \\
    & \quad
    \left.
    +
    \sum_{s=1}^{\nu-1}\left(\left(\prod_{m=s+2}^{\nu}\Asum\kdxplus{m}\right) \Bsum\kdxplus{1+s}u\kdxplus{1+s}\right)
    \right)
    +
    \Gsum\kdxplus{\nu+1}
\end{align}
Similarly, the sum $\sum$ indices may be adjusted to subsume the constant $+1$ in $\Fsum\kdxplus{1+s}$, $\Bsum\kdxplus{1+s}$, and $u\kdxplus{1+s}$. Because the sum index $s$ also appears in the lower bound of the products, $m=s+2$, the product index lower bound must also be adjusted.
\begin{align}
\nonumber
    y\kdxplus{\nu+1} 
    &
    =
    \Csum\kdxplus{\nu+1}
    \left(\left( \prod_{m=1}^{\nu}\Asum\kdxplus{m}   \right)\state\kdxplus{1}
    +
    \left(\prod_{m=2}^{\nu}\Asum\kdxplus{m}\right)\Bsum\kdxplus{1}u\kdxplus{1}
    \right.
    \\
    \nonumber
    & \qquad
    \left.
    +
    \sum_{s=1}^{\nu}\left( \left(\prod_{m=s+1}^{\nu}\Asum\kdxplus{m}\right)\Fsum\kdxplus{s} \right) 
    \right.
    \\
    & \qquad
    \left.
    +
    \sum_{s=2}^{\nu}\left(\left(\prod_{m=s+1}^{\nu}\Asum\kdxplus{m}\right) \Bsum\kdxplus{s}u\kdxplus{s}\right)
    \right)
    +
    \Gsum\kdxplus{\nu+1}
\end{align}
Finally, the two input terms (those containing $u$, arising from $\prevD\kdxplus{1}$ and $\anticaus\kdxplus{1}$) can be combined to achieve
\begin{align}
    \nonumber
    y\kdxplus{\nu+1} 
    &
    = 
    \Csum\kdxplus{\nu+1}
    \left(
    \left(\prod_{m=1}^{\nu}\Asum\kdxplus{m}\right) 
    \state\kdxplus{1}
    \right.
    \\
    \nonumber & \qquad
    \left.
    +
    \sum_{s=1}^{\nu} \left(
    \left( \prod_{m=s+1}^{\nu} \Asum\kdxplus{m} \right)
    \Fsum\kdxplus{s}
    \right)
    \right.
    \\
    & \qquad
    \left.
    +
    \sum_{s=1}^{\nu} 
    \left(
    \left( \prod_{m=s+1}^{\nu} \Asum\kdxplus{m} \right) 
    \Bsum\kdxplus{s} u\kdxplus{s}
    \right)
    \right)
    +
    \Gsum\kdxplus{\nu+1}
    \label{eq:lem2increment}
\end{align}
which is a function of $\state\kdxplus{1}$ and potentially $u\kappadx$ for $\kappa \in~ \llbracket k+~1,k+\nu+1 \rrbracket$.
The dependence of $y\kdxplus{\nu+1}$ on the input terms is conditioned on the switching sequence. Definition \ref{def:mu} implies that if $\dmu=\nu+1$ there exists some switching sequence on $\llbracket k,k+\nu+1 \rrbracket$ such that the input coefficients in (\ref{eq:lem2increment}) are zero, i.e.
\begin{multline}
\exists\, \{\state\kappadx\, |\, \kappa \in \llbracket k,k+\nu+1\rrbracket\} 
\quad
\text{ s.t. }
\quad
\\
\forall s\in\llbracket 1,\nu\rrbracket \quad
    \Csum\kdxplus{\nu+1}
    \left( \prod_{m=s+1}^{\nu} \Asum\kdxplus{m} \right) 
    \Bsum\kdxplus{s}
    =
    0
\end{multline}
Thus, to guarantee the expression for $y\kdxplus{\nu+1}$ explicitly contains the input for all switching sequences,
$\state\kdxplus{1}$ in (\ref{eq:lem2increment}) must be expanded (via (\ref{eq:sysDef_x})) to be in terms of $\state\kdx $ and $u\kdx $ explicitly.
The resulting expression can be rearranged as follows:
\begin{multline}
    y\kdxplus{\nu+1} = 
    \Csum\kdxplus{\nu+1}
    \left(
    \left(\prod_{m=1}^{\nu}\Asum\kdxplus{m}\right) 
    \left(
    \Asum\kdx  \state\kdx  + \Bsum\kdx  u\kdx  + \Fsum\kdx 
    \right)
    \right.
    \\
    \left.
    +
    \sum_{s=1}^{\nu} \left(
    \left( \prod_{m=s+1}^{\nu} \Asum\kdxplus{m} \right)
    \Fsum\kdxplus{s}
    \right)
    \right)
    +
    \anticaus\kdx(u\kdxplus{1},\cdots,u\kdxplus{\nu})
    +
    \Gsum\kdxplus{\nu+1}
\end{multline}
\begin{align}
    \nonumber
    y\kdxplus{\nu+1} 
    &
    = 
    \Csum\kdxplus{\nu+1}
    \left(
    \left(\prod_{m=1}^{\nu}\Asum\kdxplus{m}\right) 
    \Asum\kdx  \state\kdx 
    +
    \left(\prod_{m=1}^{\nu}\Asum\kdxplus{m}\right)
    \Bsum\kdx  u\kdx 
    \right.
    \\
    \nonumber & \qquad
    \left.
    +
    \sum_{s=1}^{\nu} \left(
    \left( \prod_{m=s+1}^{\nu} \Asum\kdxplus{m} \right)
    \Fsum\kdxplus{s}
    \right)
    +
    \left(\prod_{m=1}^{\nu}\Asum\kdxplus{m}\right) \Fsum\kdx 
    \right.
    \\
    & \qquad
    \left.
    +
    \Gsum\kdxplus{\nu+1}
    \right)
    +\anticaus\kdx(u\kdxplus{1},\cdots,u\kdxplus{\nu})
\end{align}
\begin{multline}
    y\kdxplus{\nu+1} = 
    \Csum\kdxplus{\nu+1}
    \left(
    \left(\prod_{m=0}^{\nu}\Asum\kdxplus{m}\right) 
    \state\kdx 
    +
    \left(\prod_{m=1}^{\nu}\Asum\kdxplus{m}\right)
    \Bsum\kdx  u\kdx
    \right.
    \\
    \left.
    +
    \sum_{s=0}^{\nu} \left(
    \left( \prod_{m=s+1}^{\nu} \Asum\kdxplus{m} \right)
    \Fsum\kdxplus{s}
    \right)
    \right)
    +
    \Gsum\kdxplus{\nu+1}
    +\anticaus\kdx(u\kdxplus{1},\cdots,u\kdxplus{\nu})
\label{eq:lem2conc}
\end{multline}
Equation (\ref{eq:lem2conc}) is 
equation (\ref{eq:previewOutput}) for $\dmu=\nu+1$, thereby proving the lemma.
\end{proof}

Using the output preview equation (\ref{eq:previewOutput}), a general \PWA{} system inverse for $\dmu\geq 1$ can be found in the same manner as for $\dmu=0$.

\begin{theorem}[General $\dmu\geq 1$ \PWA{} System Inverse]
\label{thm:genInv}
Given any \PWA{} system satisfying \ref{C:reachable}-\ref{C:muc} with known global dynamical relative degree $\dmu \geq ~1$, the inverse system with $u\kdx $ as the output is given by the implicit, anticausal system
\begin{IEEEeqnarray}{RL}
\nonumber
    \state\kdxplus{1} &= 
    \inv{\Asum}\kdx  \state\kdx  
    +
    \inv{\Bsum}\kdx y\kdxplus{\dmu}
    + \inv{\Fsum}\kdx
    \\
\IEEEyesnumber
    & \qquad
    -\Bsum\kdx\prevD\kdx^{-1}\anticaus\kdx\left(u\kdxplus{1},\cdots,u\kdxplus{\dmu-1}\right)
\IEEEyessubnumber\label{eq:xGen}
    \\
    \nonumber
    u\kdx  &=  \inv{\Csum}\kdx\state\kdx+\inv{\Dsum}\kdx y\kdxplus{\dmu}+\inv{\Gsum}\kdx
    \\
    & \qquad
    -\prevD\kdx^{-1}\anticaus\kdx\left(u\kdxplus{1},\cdots,u\kdxplus{\dmu-1}\right)
\IEEEyessubnumber\label{eq:uGen}
\end{IEEEeqnarray}
where
\begin{align*}
\inv{\Asum}\kdx &= \Asum\kdx + \Bsum\kdx\inv{\Csum}\kdx
&
\inv{\Bsum}\kdx &= \Bsum\kdx\inv{\Dsum}\kdx
&
\inv{\Fsum}\kdx &= \Fsum\kdx+\Bsum\kdx\inv{\Gsum}\kdx
\\
\inv{\Csum}\kdx &= -\inv{\Dsum}\kdx\prevC\kdx
&
\inv{\Dsum}\kdx &= \prevD\kdx^{-1}
&
\inv{\Gsum}\kdx&=-\inv{\Dsum}\kdx\prevG\kdx
\end{align*}
\end{theorem}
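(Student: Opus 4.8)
The plan is to follow the three-step conventional inversion recipe \ref{step1}--\ref{step3} used in the proof of Theorem \ref{thm:mu0inv}, but now substituting the output preview of Lemma \ref{lem:previewOutput} for (\ref{eq:sysDef_y}). Step \ref{step1} is discharged immediately: Lemma \ref{lem:previewOutput} supplies (\ref{eq:previewOutput}), the minimal-preview output equation that is explicitly affine in $u\kdx$ for every switching sequence on $\llbracket k,k+\dmu\rrbracket$.

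For \ref{step2} I would solve (\ref{eq:previewOutput}) for $u\kdx$. The enabling fact is that $\prevD\kdx$ is invertible: by \ref{C:SISO} the system is scalar, so $\prevD\kdx$ is a scalar; and by Definition \ref{def:mu} the relative degree $\dmu$ is exactly the smallest preview for which $u\kdx$ appears outside the selector functions for all switching sequences, so its coefficient $\prevD\kdx$ is nonzero irrespective of the switching trajectory. Dividing through then gives
\begin{equation*}
u\kdx = \prevD\kdx^{-1}\left( y\kdxplus{\dmu} - \prevC\kdx\state\kdx - \prevG\kdx - \anticaus\kdx\left(u\kdxplus{1},\cdots,u\kdxplus{\dmu-1}\right) \right),
\end{equation*}
and grouping coefficients reproduces the output equation (\ref{eq:uGen}), with $\inv{\Csum}\kdx=-\prevD\kdx^{-1}\prevC\kdx$, $\inv{\Dsum}\kdx=\prevD\kdx^{-1}$, $\inv{\Gsum}\kdx=-\prevD\kdx^{-1}\prevG\kdx$, and the $\anticaus\kdx$ contribution retained as the separate term $-\prevD\kdx^{-1}\anticaus\kdx$.

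For \ref{step3} I would substitute this expression for $u\kdx$ into the state transition (\ref{eq:sysDef_x}) and collect the $\state\kdx$, $y\kdxplus{\dmu}$, constant, and $\anticaus\kdx$ contributions separately. The $\state\kdx$ terms combine into $\inv{\Asum}\kdx=\Asum\kdx+\Bsum\kdx\inv{\Csum}\kdx$, the $y\kdxplus{\dmu}$ term into $\inv{\Bsum}\kdx=\Bsum\kdx\inv{\Dsum}\kdx$, the constants into $\inv{\Fsum}\kdx=\Fsum\kdx+\Bsum\kdx\inv{\Gsum}\kdx$, and the anticausal part into $-\Bsum\kdx\prevD\kdx^{-1}\anticaus\kdx$, which is exactly (\ref{eq:xGen}). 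This is routine substitution and bookkeeping with no matrix inversions beyond the scalar $\prevD\kdx^{-1}$.

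The only genuine subtlety, and the step I would guard most carefully, is the invertibility of $\prevD\kdx$ together with the resulting implicit, anticausal character of the inverse. Invertibility is not automatic from the system form; it rests on reading Definition \ref{def:mu} correctly alongside \ref{C:SISO}, so I would state explicitly that the ``for all switching sequences'' clause in that definition is precisely what forbids a switching trajectory driving $\prevD\kdx$ to zero. I would also flag that, unlike the $\dmu=0$ case, the term $\anticaus\kdx(u\kdxplus{1},\cdots,u\kdxplus{\dmu-1})$ depends on \emph{future} inputs, so (\ref{eq:uGen})--(\ref{eq:xGen}) characterize the inverse only implicitly and anticausally. This is a feature of the result rather than a gap, and it is why the theorem claims an implicit inverse instead of the explicit \PWA{} inverse obtained in Theorem \ref{thm:mu0inv}.
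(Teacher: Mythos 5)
Your derivation of the formulas themselves follows the same route as the paper: invoke Lemma \ref{lem:previewOutput} for \ref{step1}, divide (\ref{eq:previewOutput}) by the scalar, nonzero $\prevD\kdx$ to isolate $u\kdx$ for \ref{step2} (with the nonzeroness justified exactly as the paper does, via \ref{C:SISO} and Definition \ref{def:mu}), and substitute into (\ref{eq:sysDef_x}) for \ref{step3}. That part is fine.

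The genuine gap is in your treatment of the word \emph{implicit}, which is part of the theorem's claim and which you mischaracterize. You attribute implicitness to the fact that $\anticaus\kdx$ depends on the future inputs $u\kdxplus{1},\cdots,u\kdxplus{\dmu-1}$. That is not implicitness; that is anticausality plus input preview, and it could be resolved by ordinary backward-in-time propagation if the coefficients were known. The actual source of implicitness is that the coefficients $\prevC\kdx$, $\prevD\kdx$, $\prevG\kdx$, $\anticaus\kdx$ are built from $\Csum\kdxplus{\dmu}$ and $\Asum\kdxplus{m}$, which by (\ref{eq:Mshorthand})-(\ref{eq:localization}) depend on the future states $\state\kdxplus{1},\ldots,\state\kdxplus{\dmu}$, and those states depend on $u\kdx$ itself. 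So $u\kdx$ appears on \emph{both} sides of (\ref{eq:uGen}), hidden inside the selector functions, and your division step silently treats switching-dependent quantities as known when they are functions of the unknown. The paper makes this concrete by exhibiting a two-location counterexample in which, from $\state\kdx=[\,0,\,0\,]^T$, both $u\kdx=1$ and $u\kdx=2$ produce the same output $y\kdxplus{\dmu}=2$: the solution of (\ref{eq:previewOutput}) for $u\kdx$ is non-unique, hence no explicit solution formula can exist. Without this (or an equivalent argument), you have not proven the ``implicit'' qualifier, and the distinction between Theorem \ref{thm:genInv} and the explicit inverses of Theorem \ref{thm:mu0inv} and Corollaries for $\dmu\in\{1,2\}$ --- which is the whole point of the extra assumptions \ref{C:locIndep} and \ref{C:outputSwitch} --- is left unexplained.
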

\begin{proof}
The sole term in
Lemma \ref{lem:previewOutput}'s
(\ref{eq:previewOutput})
containing $u\kdx $ outside of a selector function 
is $\prevD\kdx u\kdx$. The coefficient $\prevD\kdx=~\Csum\kdxplus{\dmu}\left( \prod_{m=1}^{\dmu-1} \Asum\kdxplus{m} \right) \Bsum_{k}$ is always scalar because the system is SISO, \ref{C:SISO}, and always nonzero by Definition \ref{def:mu}.
Thus (\ref{eq:previewOutput}) can be divided by $\prevD\kdx$ and $u\kdx $ can be arithmetically 
isolated on one side of the equation,
yielding (\ref{eq:uGen}).

Equation (\ref{eq:uGen}) is implicit in general because (\ref{eq:previewOutput}) cannot be uniquely solved for $u\kdx$ in some cases.
This is proven by presenting an example in which multiple input trajectories have the same output trajectory. Consider the two-location system
\begin{equation}
\begin{aligned}
    &
    A_{1,k}=\begin{bmatrix}0&1\\0&0\end{bmatrix}
    \quad
    A_{2,k}=\begin{bmatrix}0&2\\0&0\end{bmatrix}
    \quad
    B_{1,k}=B_{2,k}=\begin{bmatrix}0\\1\end{bmatrix}
    \\
    &
    C_{1,k}=C_{2,k}=\begin{bmatrix}1&0\end{bmatrix}
    \\
    &
    P=\begin{bmatrix}0&1\end{bmatrix}
    \qquad
    \offsetvec=1.5
    \qquad
    \sigvecset_1=\{1\}
    \qquad
    \sigvecset_2=\{2\}
\end{aligned}
\end{equation}
with $F$, $D$, and $G$ matrices all equal to zero. Given $\state\kdx=~[\,0,\,\,0\,]^T$, both $u\kdx=2$ and $u\kdx=1$ yield $y\kdxplus{\dmu}=2$. Because there is not a unique solution to (\ref{eq:previewOutput}) for $u\kdx$, there does not exist an explicit formula for the solution (\ref{eq:uGen}).

The system is necessarily anticausal because $u\kdx $ is necessarily a function of $y\kdxplus{\dmu}$ 
(and no prior $y$ terms)
and $\dmu > 0$.
\end{proof}

\begin{remark}[Inverse Implicitness]
Analytically, the implicitness of (\ref{eq:uGen}) arises in $\inv{\Csum}\kdx$ through $\prevC\kdx$, which is a function of $\Csum\kdxplus{\dmu}$ by definition, 
and $\Csum\kdxplus{\dmu}$
is a function of $\state\kdxplus{\dmu}$ by 
(\ref{eq:Mshorthand})-(\ref{eq:localization}). 
Finally, $\state\kdxplus{\dmu}$ is a function of $u\kdx$ via
\begin{multline}
    \state\kdxplus{\dmu} = 
    \left(\prod_{m=0}^{\dmu-1}\Asum\kdxplus{m}\right) 
    \state\kdx 
    +
    \sum_{s=0}^{\dmu-1} \left(
    \left( \prod_{m=s+1}^{\dmu-1} \Asum\kdxplus{m} \right)
    \Fsum\kdxplus{s}
    \right)
    +
    \\
    \sum_{s=1}^{\dmu-1} 
    \left(
    \left( \prod_{m=s+1}^{\dmu-1} \Asum\kdxplus{m} \right) 
    \Bsum\kdxplus{s} u\kdxplus{s}
    \right)
    +
    \left( \prod_{m=1}^{\dmu-1} \Asum\kdxplus{m} \right) 
    \Bsum_{k} u_{k}
\end{multline}
following from $y\kdxplus{\dmu}=\Csum\kdxplus{\dmu}\state\kdxplus{\dmu}+\Gsum\kdxplus{\dmu}$ and Lemma \ref{lem:previewOutput}.
\end{remark}

\begin{remark}[Input Preview and Inter-location Relative Degree]
Note $\anticaus\kdx$ is written as a function of a set of previewed
$u$-values.
The written set of $u$-values is the maximum quantity of $u$-values that may be required by $\anticaus\kdx$. Depending on the switching sequence, fewer previewed $u$-values may be required. 
In fact, by the definition of global dynamical relative degree $\dmu$, there must exist a switching sequence for which no previewed $u$-values are required, because otherwise $\dmu$ would be smaller. In other words, Definition \ref{def:mu} and Lemma \ref{lem:previewOutput} imply
\begin{multline}
\label{eq:uPreviewless}
    \exists\{\state\kappadx \, |\,  \kappa\in\llbracket k,k+\dmu\rrbracket\} \text{ s.t. }
    \\
    \forall s\in\llbracket 1,\dmu-1\rrbracket \,\,\,
    \Csum\kdxplus{\dmu}
    \left(\prod_{m=s+1}^{\dmu
    -
    1}\Asum\kdxplus{m}\right)\Bsum\kdxplus{s}=0 
    \\
    \land \,\,\, \Csum\kdxplus{\dmu}\left(\prod_{m=1}^{\dmu-1}\Asum\kdxplus{m}\right)\Bsum\kdx\neq 0
\end{multline}

For affine time-invariant systems without piecewise definition, there is never required input preview because the expressions claimed equal to zero in (\ref{eq:uPreviewless}) reduce as
\begin{equation}
    \Csum\kdxplus{\dmu}
    \left(\prod_{m=s+1}^{\dmu
    -
    1}\Asum\kdxplus{m}\right)\Bsum\kdxplus{s}
    =
    CA^{\dmu-1-s}B
    \qquad
    s\in\llbracket 1,\,\dmu-1\rrbracket
\end{equation}
which are always zero regardless of state sequence.
However, it is important to emphasize that this is not the case for \PWA{} systems.
Even when the relative degrees of all component models 
are equal to $\mu_c$, \ref{C:muc}, the inter-location relative degree may not be equal to $\mu_c$.
More formally
\begin{align}
    &C_{1,k+1}B_{1,k} = 0 \land C_{2,k+1}B_{2,k} = 0 \centernot\implies C_{1,k+1}B_{2,k}=0
    \end{align}
    and
    \begin{align}
    &C_{1,k+1}B_{1,k} \neq 0 \land C_{2,k+1}B_{2,k} \neq 0 \centernot\implies C_{1,k+1}B_{2,k}\neq0
\end{align}
and this lack of conclusiveness regarding the inter-location relative degree generalizes to larger $\mu_c$. In short, the inter-location relative degree may be either lower or higher than $\mu_c$, and may be different for different switching sequences (with a maximum value of $\dmu$ as given by Definition \ref{def:mu}).
\end{remark}

Because of the implicitness and potential requirement for input preview in the general \PWA{} system inverse,
it is nontrivial to use it for computing input trajectories from output trajectories.
However, there are conditions under which inversion of a \PWA{} system with $\dmu\geq 1$ is unique, and the inverse itself becomes an explicit \PWA{} system as is the case for $\dmu=0$.
The remainder of this section provides such sufficient conditions for the cases of $\dmu=1$ and $\dmu=2$, with the corresponding explicit inverse formulas given as corollaries to Theorem \ref{thm:genInv}.
\subsection{Unique Exact Inverses For 
\texorpdfstring{$\dmu\in\{1,2\}$}{Relative Degree 1 or 2}
}
First the 
``location-invariant output function''
assumption is introduced:
\begin{enumerate}[label=(A\arabic*),leftmargin=*]
    \resume{listA}
    \item 
    \label{C:locIndep}
    $\Csum\kdx=C\kdx$, $\Dsum\kdx=D\kdx$, $\Gsum\kdx=G\kdx$, with $C\kdx$, $D\kdx$, $G\kdx$ indicating parameters that potentially vary with time but that are identical $\forall q\in\llbracket 1,\regionQuant\rrbracket$
    and are known $\forall k$
    \suspend{listB}
\end{enumerate}

\begin{lemma}[Relative Degree of 1]
\label{lem:mu1}
A \PWA{} system satisfying \ref{C:reachable}-\ref{C:locIndep} has a global dynamical relative degree of 1 if and only if the relative degree of all component models are 1 for all time:
\begin{equation}
    \mu_c=1 \iff \dmu=1
    \label{eq:LemMu1}
\end{equation}
\end{lemma}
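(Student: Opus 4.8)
The plan is to prove both implications of (\ref{eq:LemMu1}) separately, reusing Lemma \ref{lem:mu0} together with the base case of Lemma \ref{lem:previewOutput}, with assumption \ref{C:locIndep} doing the essential work of collapsing inter-location coefficients onto single-location ones.

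First I would establish the forward implication $\mu_c=1\implies\dmu=1$. Since $\mu_c=1\neq 0$, assumption \ref{C:muc} gives $\mu_q\neq 0$ for all $q$, so the contrapositive form (\ref{eq:mu0contrapositive}) of Lemma \ref{lem:mu0} yields $\dmu\geq 1$. It then remains to show $\dmu\leq 1$, i.e. that $y\kdxplus{1}$ explicitly contains $u\kdx$ for every switching sequence. From the base case (\ref{eq:BaseCase}), the coefficient of $u\kdx$ is $\Csum\kdxplus{1}\Bsum\kdx$. By \ref{C:stateswitch} the system occupies a single location $q^*$ at time $k$, so $\Bsum\kdx=B_{q^*,k}$, and by \ref{C:locIndep} the output matrix is location-invariant, $\Csum\kdxplus{1}=C\kdxplus{1}=C_{q^*,k+1}$. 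Hence the coefficient equals $C_{q^*,k+1}B_{q^*,k}$, which is nonzero precisely because $\mu_{q^*}=\mu_c=1$. As this holds for whichever location is occupied at $k$ and does not involve the location entered at $k+1$, the output $y\kdxplus{1}$ always depends explicitly on $u\kdx$, giving $\dmu\leq 1$ and thus $\dmu=1$.

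Next I would prove the converse $\dmu=1\implies\mu_c=1$ by contraposition. From $\dmu=1\geq 1$ and Lemma \ref{lem:mu0} (again via \ref{C:muc}) we already have $\mu_c\geq 1$, so it suffices to exclude $\mu_c\geq 2$. Assuming $\mu_c\geq 2$, every component model satisfies $D_{q,k}=0$ and $C_{q,k+1}B_{q,k}=0$ for all $q,k$, these being the defining conditions of relative degree at least two. By reachability \ref{C:reachable} the system can be driven to any location $q^*$ at some time $k$, and repeating the coefficient computation under \ref{C:locIndep} and \ref{C:stateswitch} gives $\Csum\kdxplus{1}\Bsum\kdx=C_{q^*,k+1}B_{q^*,k}=0$. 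Then $y\kdxplus{1}$ does not depend on $u\kdx$, so the minimal explicit preview exceeds one, i.e. $\dmu\geq 2$, contradicting $\dmu=1$. Hence $\mu_c=1$, completing the equivalence.

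The main obstacle is the backward implication, and specifically the necessity of assumption \ref{C:locIndep}. As the inter-location relative degree discussion following Theorem \ref{thm:genInv} makes explicit, without location-invariance the cross product $C_{q',k+1}B_{q^*,k}$ for distinct locations need not vanish even when each single-location product $C_{q,k+1}B_{q,k}$ does, so component relative degrees alone would fail to pin down $\dmu$. The crux is therefore to argue that \ref{C:locIndep} forces $\Csum\kdxplus{1}=C_{q^*,k+1}$ regardless of the location entered at $k+1$, thereby reducing every inter-location coefficient to the single-location coefficient whose vanishing or non-vanishing is governed directly by $\mu_c$.
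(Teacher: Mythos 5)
Your proof is correct and follows essentially the same route as the paper's: both directions rest on Lemma \ref{lem:mu0} (giving $\dmu\geq 1$ and $\Dsum\kdx=0$), the $\dmu=1$ base case of Lemma \ref{lem:previewOutput}, and assumption \ref{C:locIndep} collapsing $\Csum\kdxplus{1}\Bsum\kdx$ to the single-location coefficient $C\kdxplus{1}B_{q^*,k}$. The only cosmetic difference is that you prove the backward implication by contraposition (excluding $\mu_c\geq 2$), whereas the paper argues directly from the existence of a switching sequence with $\Csum\kdxplus{1}\Bsum\kdx\neq 0$ via (\ref{eq:uPreviewless}); the two arguments use identical ingredients.
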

\begin{proof}
The forward implication follows directly from \ref{C:muc}, \ref{C:locIndep}, and $\mu_c=1$:
    \begin{align}
        \mu_c=1 &\implies D\kdxplus{1}=0 
        \,\land\, C\kdxplus{1}B_{q,k}\neq 0 \,\,\,\forall q\in\llbracket 1,\regionQuant\rrbracket
        \\
        &\implies
        \Dsum\kdxplus{1}=0 \,\land\, \Csum\kdxplus{1}\Bsum\kdx\neq 0 \,\,\,\forall k
    \end{align}
This is equivalent to implying that the inter-location 
relative degree
can be neither higher nor lower than $\mu_c=1$ under \ref{C:muc} and \ref{C:locIndep}
\begin{equation}
    \therefore \,\,\, \mu_c=1\implies \dmu=1
\end{equation}
The backward implication follows from (\ref{eq:uPreviewless}) (i.e. the combination of Definition \ref{def:mu} and Lemma \ref{lem:previewOutput}), which implies
\begin{align}
    \dmu=1\implies \exists\{\state\kdx,\state\kdxplus{1}\} \text{ s.t. } \Csum\kdxplus{1}\Bsum\kdx\neq 0
\end{align}
Then, by \ref{C:muc}, \ref{C:locIndep}, and the fact that $\dmu=1\implies \Dsum\kdx=0$ by Lemma \ref{lem:mu0},
one finds that
$\Csum\kdxplus{1}\Bsum\kdx=C\kdxplus{1}\Bsum\kdx\neq 0\implies \mu_c=1$. Therefore (\ref{eq:LemMu1}) is true.
\end{proof}

\begin{corollary}[Unique Inverse of $\dmu=1$ \PWA{} Systems]
\label{corollary:mu1}
The inverse of a \PWA{} system satisfying \ref{C:reachable}-\ref{C:locIndep} with 
$\mu_c=1$
is 
given by the following explicit, anticausal \PWA{} system.
\begin{IEEEeqnarray}{RL}
\IEEEyesnumber
    \state\kdxplus{1} &= 
    \inv{\Asum}\kdx \state\kdx  + \inv{\Bsum}\kdx y\kdxplus{1} + \inv{\Fsum}\kdx 
\IEEEyessubnumber
    \\
    u\kdx  &= \inv{\Csum}\kdx  \state\kdx  + \inv{\Dsum}\kdx  y\kdxplus{1} + \inv{\Gsum}\kdx
\IEEEyessubnumber
\end{IEEEeqnarray}
where
\begin{align*}
    \inv{\Asum}\kdx  &= \Asum\kdx +\Bsum\kdx \inv{\Csum}\kdx
    & & &
    \inv{\Csum}\kdx  &= - \inv{\Dsum}\kdx C\kdxplus{1}\Asum\kdx 
    \\
    \inv{\Bsum}\kdx  &= \Bsum\kdx \inv{\Dsum}\kdx
    & & &
    \inv{\Dsum}\kdx  &= \left( C\kdxplus{1} \Bsum_{k}\right)^{-1}
    \\
    \inv{\Fsum}\kdx  &= \Fsum\kdx +\Bsum\kdx\inv{\Gsum}\kdx
    & & &
    \inv{\Gsum}\kdx  &= -\inv{\Dsum}\kdx\left(C\kdxplus{1} \Fsum\kdx  + G\kdxplus{1} \right)
\end{align*}
\end{corollary}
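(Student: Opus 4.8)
The plan is to obtain Corollary \ref{corollary:mu1} as a direct specialization of the general inverse in Theorem \ref{thm:genInv} to the case $\dmu=1$, which by Lemma \ref{lem:mu1} is equivalent to the hypothesis $\mu_c=1$. The only real work is to evaluate the preview quantities $\prevC\kdx$, $\prevD\kdx$, $\prevG\kdx$, and $\anticaus\kdx$ of Lemma \ref{lem:previewOutput} at $\dmu=1$, and then to argue that the added location-invariant output assumption \ref{C:locIndep} renders the resulting inverse explicit rather than implicit.

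First I would set $\dmu=1$ in the definitions from Lemma \ref{lem:previewOutput}. The products over $m$ collapse to empty products (hence the identity) and the summations collapse to single terms or to empty sums, giving
\begin{align*}
\prevC\kdx &= \Csum\kdxplus{1}\Asum\kdx, &
\prevD\kdx &= \Csum\kdxplus{1}\Bsum\kdx, \\
\prevG\kdx &= \Csum\kdxplus{1}\Fsum\kdx + \Gsum\kdxplus{1}, &
\anticaus\kdx &= 0,
\end{align*}
where the crucial observation is that $\anticaus\kdx$ is an empty sum ($s$ ranging from $1$ to $0$) and therefore vanishes identically. Substituting these into Theorem \ref{thm:genInv}, the two preview-correction terms $-\Bsum\kdx\prevD\kdx^{-1}\anticaus\kdx$ and $-\prevD\kdx^{-1}\anticaus\kdx$ drop out and $y\kdxplus{\dmu}$ becomes $y\kdxplus{1}$. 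Invoking \ref{C:locIndep} to replace $\Csum\kdxplus{1}$ by $C\kdxplus{1}$ and $\Gsum\kdxplus{1}$ by $G\kdxplus{1}$ then reproduces exactly the six stated expressions $\inv{\Asum}\kdx$ through $\inv{\Gsum}\kdx$, with $\inv{\Dsum}\kdx=(C\kdxplus{1}\Bsum\kdx)^{-1}$. I would also verify that this coefficient is well-defined: it is scalar by \ref{C:SISO}, and nonzero because $\dmu=1$ forces $C\kdxplus{1}\Bsum\kdx\neq 0$ for every admissible switching sequence via Definition \ref{def:mu} (equivalently Lemma \ref{lem:mu1}).

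The step that genuinely requires \ref{C:locIndep}, and the only part that is more than bookkeeping, is establishing that the inverse is \emph{explicit}. In the general setting of Theorem \ref{thm:genInv}, implicitness enters through $\prevC\kdx$, whose factor $\Csum\kdxplus{\dmu}$ depends on the future state $\state\kdxplus{\dmu}$ and hence on the unknown current input $u\kdx$, as noted in the Inverse Implicitness remark. Here I would argue that \ref{C:locIndep} severs precisely this dependence: since $\Csum\kdxplus{1}=C\kdxplus{1}$ is location-invariant, it does not depend on $\state\kdxplus{1}$, so $\prevC\kdx$, $\prevD\kdx$, and $\prevG\kdx$ depend on the trajectory only through the known current state $\state\kdx$ (via $\Asum\kdx$, $\Bsum\kdx$, $\Fsum\kdx$) and the known time-varying output parameters. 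Consequently $u\kdx$ solves the $\dmu=1$ preview equation in closed form with no self-reference, making the inverse an explicit \PWA{} system; it is anticausal for the same reason as in Theorem \ref{thm:genInv}, namely that $u\kdx$ depends on the future output $y\kdxplus{1}$. I expect the main obstacle to be conceptual rather than computational: carefully confirming that \ref{C:locIndep} is exactly what eliminates the future-state (hence input) dependence flagged in that remark, thereby upgrading the implicit inverse of Theorem \ref{thm:genInv} to the explicit one claimed here. Once $\anticaus\kdx$ is seen to vanish, the algebraic substitutions are routine.
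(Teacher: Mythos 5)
Your proposal is correct and takes essentially the same route as the paper's proof: invoke Lemma \ref{lem:mu1} to get $\dmu=1$, specialize (\ref{eq:uGen}) from Theorem \ref{thm:genInv} (where $\anticaus\kdx$ vanishes as an empty sum), apply \ref{C:locIndep}, and note that $\left(C\kdxplus{1}\Bsum\kdx\right)^{-1}$ is well defined since it is scalar by \ref{C:SISO} and nonzero by Definition \ref{def:mu}. Your additional elaboration on why \ref{C:locIndep} severs the future-state (hence input) dependence of $\Csum\kdxplus{1}$ makes explicit an argument the paper compresses into the bare assertion ``which is explicit,'' but it is the same argument.
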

\begin{proof}
$\dmu=1$ by Lemma \ref{lem:mu1}.
Plugging \ref{C:locIndep} and $\dmu=1$ into (\ref{eq:uGen}) 
yields
\begin{equation}
    u\kdx  = (C\kdxplus{1}\Bsum\kdx )^{-1} (y\kdxplus{1}-C\kdxplus{1}(\Asum\kdx \state\kdx +\Fsum\kdx )-G\kdxplus{1})
\end{equation}
which is explicit.
Note that 
$\left(C\kdxplus{1}\Bsum\kdx\right)^{-1}$ is always well 
defined because it is scalar by \ref{C:SISO} and is nonzero
by $\dmu=1$ and Definition \ref{def:mu}.
\end{proof}

Derivation of explicit inverses for systems with $\dmu=2$ uses ``output function time-invariance'' and ``output-based switching'' assumptions. If $\dmu>0$, this can be expressed as
\begin{enumerate}[label=(A\arabic*),leftmargin=*]
\resume{listB}
\item
\label{C:outputSwitch}
$P = P_o\Csum\kdx$,
\,
$\offsetvec = 
\offsetvec_o-P_o\Gsum\kdx$
\,with
$\Csum\kdx=C$,
$\Gsum\kdx=G$ $\forall k$
\suspend{listC}
\end{enumerate}
where $P_o$ and $\offsetvec_o$ contain the orientation vectors and offsets of hyperplanes in the output space $\mathbb{R}^{n_y}$.

\begin{corollary}[Unique Inverse of $\dmu=2$ \PWA{} Systems]
The inverse of a \PWA{} system satisfying \ref{C:reachable}-\ref{C:outputSwitch} with known global dynamical relative degree $\dmu=2$ is given by the following explicit, anticausal \PWA{} system.
\begin{IEEEeqnarray}{RL}
\IEEEyesnumber
    \state\kdxplus{1} &= 
    \inv{\Asum}\kdx \state\kdx  + \inv{\Bsum}\kdx y\kdxplus{2} + \inv{\Fsum}\kdx 
\IEEEyessubnumber
    \\
    u\kdx  &= \inv{\Csum}\kdx  \state\kdx  + \inv{\Dsum}\kdx  y\kdxplus{2} + \inv{\Gsum}\kdx 
\IEEEyessubnumber
\end{IEEEeqnarray}
where
\begin{IEEEeqnarray}{C}
    \inv{\Asum}\kdx  = 
    \Asum\kdx +\Bsum\kdx 
    \inv{\Csum}\kdx 
    \qquad\quad
    \inv{\Bsum}\kdx  = \Bsum\kdx 
    \inv{\Dsum}\kdx 
    \qquad\quad
    \inv{\Fsum}\kdx  = \Fsum\kdx +\Bsum\kdx 
    \inv{\Gsum}\kdx 
\nonumber
\\[1pt]
    \inv{\Csum}\kdx  = 
    -
    \inv{\Dsum}\kdx
    C\kdxplus{2}\Asum\kdxplus{1}\Asum\kdx 
    \qquad\quad
    \inv{\Dsum}\kdx  = 
    \left( C\kdxplus{2} \Asum\kdxplus{1}\Bsum\kdx \right)^{-1}
\nonumber
\\[1pt]
    \inv{\Gsum}\kdx  = 
    -
    \inv{\Dsum}\kdx
    \left(
    C\kdxplus{2} \Asum\kdxplus{1}\Fsum\kdx  + C\kdxplus{2}\Fsum\kdxplus{1} + G\kdxplus{2} \right)
\nonumber
\end{IEEEeqnarray}
\end{corollary}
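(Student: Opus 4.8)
The plan is to specialize the general inverse of Theorem~\ref{thm:genInv} to the case $\dmu=2$ and show that assumptions \ref{C:locIndep}--\ref{C:outputSwitch} collapse it to the claimed explicit, anticausal \PWA{} form. Setting $\dmu=2$ in Lemma~\ref{lem:previewOutput} yields the concrete preview coefficients $\prevC\kdx=\Csum\kdxplus{2}\Asum\kdxplus{1}\Asum\kdx$, $\prevD\kdx=\Csum\kdxplus{2}\Asum\kdxplus{1}\Bsum\kdx$, $\prevG\kdx=\Csum\kdxplus{2}(\Asum\kdxplus{1}\Fsum\kdx+\Fsum\kdxplus{1})+\Gsum\kdxplus{2}$, and the single input-preview term $\anticaus\kdx(u\kdxplus{1})=\Csum\kdxplus{2}\Bsum\kdxplus{1}u\kdxplus{1}$. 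There are exactly two features of (\ref{eq:uGen})--(\ref{eq:xGen}) that prevent an explicit formula: the anticausal term $\anticaus\kdx$, and the fact that the coefficients depend on $\Asum\kdxplus{1}$ and $\Fsum\kdxplus{1}$, which a priori depend on $\state\kdxplus{1}$ and hence on the unknown $u\kdx$. I would dispatch these in turn and then read off the coefficients.

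First I would eliminate the anticausal term. Since $\dmu=2$, Lemma~\ref{lem:mu0} rules out $\mu_c=0$ and Lemma~\ref{lem:mu1} rules out $\mu_c=1$, so by \ref{C:muc} every component model satisfies $\mu_c\geq 2$. Treating $C$ as the common, time-invariant output matrix by \ref{C:locIndep}--\ref{C:outputSwitch}, relative degree at least two means the direct Markov parameter vanishes, $CB_{q,k}=0$ for all $q\in\llbracket 1,\regionQuant\rrbracket$ and all $k$. Hence $\Csum\kdxplus{2}\Bsum\kdxplus{1}=\sum_q (CB_{q,k+1})K_q(\locvec\kdxplus{1})=0$, so $\anticaus\kdx\equiv 0$ and the $-\prevD\kdx^{-1}\anticaus\kdx$ corrections in (\ref{eq:uGen})--(\ref{eq:xGen}) disappear.

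The key step, and the one I expect to be the main obstacle, is removing the implicitness. Under the output-based switching assumption \ref{C:outputSwitch}, the localization (\ref{eq:localization}) at the future step becomes $\locvec\kdxplus{1}=H(P\state\kdxplus{1}-\offsetvec)=H(P_o(C\state\kdxplus{1}+G)-\offsetvec_o)=H(P_o y\kdxplus{1}-\offsetvec_o)$, where I use $\Dsum\kdx=0$ (from $\mu_c\geq 1$) so that $y\kdxplus{1}=C\state\kdxplus{1}+G$. Thus every future selector function, and therefore $\Asum\kdxplus{1}$ and $\Fsum\kdxplus{1}$, is determined solely by the known output $y\kdxplus{1}$ rather than by the unknown $u\kdx$, while the time-$k$ matrices are fixed by the known $\state\kdx$. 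Consequently $\prevC\kdx$, $\prevD\kdx$, and $\prevG\kdx$ are explicit functions of the data $(\state\kdx,y\kdxplus{1},y\kdxplus{2})$, and solving (\ref{eq:previewOutput}) for $u\kdx$ reduces to a genuine (scalar, by \ref{C:SISO}) division by $\prevD\kdx=C\kdxplus{2}\Asum\kdxplus{1}\Bsum\kdx$, which is nonzero by Definition~\ref{def:mu} with $\dmu=2$.

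Finally I would match coefficients. With $\anticaus\kdx\equiv0$ and $\Csum\kdxplus{2}=C$, $\Gsum\kdxplus{2}=G$ by \ref{C:outputSwitch}, equation (\ref{eq:uGen}) reduces to $u\kdx=\inv{\Csum}\kdx\state\kdx+\inv{\Dsum}\kdx y\kdxplus{2}+\inv{\Gsum}\kdx$ with $\inv{\Dsum}\kdx=\prevD\kdx^{-1}$, $\inv{\Csum}\kdx=-\inv{\Dsum}\kdx\prevC\kdx$, and $\inv{\Gsum}\kdx=-\inv{\Dsum}\kdx\prevG\kdx$; substituting the $\dmu=2$ preview coefficients gives exactly the claimed expressions. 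Substituting this $u\kdx$ into (\ref{eq:sysDef_x}) reproduces $\inv{\Asum}\kdx$, $\inv{\Bsum}\kdx$, and $\inv{\Fsum}\kdx$ in the same form as Theorem~\ref{thm:genInv}. Anticausality follows because $u\kdx$ depends on $y\kdxplus{2}$ with $\dmu=2>0$, and the object is a \PWA{} system because its coefficient matrices are piecewise constant in the now output-driven switching variable.
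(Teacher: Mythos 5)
Your proposal is correct and follows essentially the same route as the paper: specialize Theorem~\ref{thm:genInv} (via Lemma~\ref{lem:previewOutput}) to $\dmu=2$, eliminate the input-preview term $\anticaus\kdx$ using $\mu_c\geq 2$ (hence $CB_{q,k}=0$ under \ref{C:locIndep}--\ref{C:outputSwitch}), and use \ref{C:outputSwitch} to make the time-$(k+1)$ selector functions, and hence $\Asum\kdxplus{1}$ and $\Fsum\kdxplus{1}$, independent of the unknown $u\kdx$; the paper likewise gets $\mu_c>1$ from Lemma~\ref{lem:mu1}. The one genuine difference is in how the future selector is resolved: you write $P\state\kdxplus{1}-\offsetvec = P_o y\kdxplus{1}-\offsetvec_o$ and treat $y\kdxplus{1}$ as known data, whereas the paper expands $\state\kdxplus{1}=\Asum\kdx\state\kdx+\Bsum\kdx u\kdx+\Fsum\kdx$ inside the Heaviside argument and uses $C\kdxplus{1}\Bsum\kdx=0$ to conclude the selector is a function of $\state\kdx$ alone. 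The two resolutions coincide along trajectories, but the paper's version is marginally stronger: it keeps the inverse system's switching a function of the inverse's own current state, so it does not presuppose that the prescribed samples $y\kdxplus{1}$ are consistent with the state trajectory actually generated---a premise that is automatic when exactly inverting a realizable output, but not when an arbitrary, not-exactly-trackable reference is fed to the inverse, as in the feedforward application. Since you invoke $CB_{q,k}=0$ anyway to kill $\anticaus\kdx$, adopting the paper's state-based resolution would cost you nothing and would make the ``explicit, anticausal \PWA{} system'' claim self-contained.
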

\begin{proof}
Plugging \ref{C:locIndep} and $\dmu=2$ into (\ref{eq:uGen}) yields
\begin{multline}
        u\kdx =
        \left(
        C\kdxplus{2} \Asum\kdxplus{1} \Bsum\kdx 
        \right)^{-1}
        \left(
        y\kdxplus{2} - 
        C\kdxplus{2}
        \Asum\kdxplus{1}\Asum\kdx 
        \state\kdx 
        \right.
        \\
        \left.
        -
        C\kdxplus{2}
        \Asum\kdxplus{1}
        \Fsum\kdx 
        -
        C\kdxplus{2}
        \Fsum\kdxplus{1}
        -
        G\kdxplus{2}
        \right)
        \label{eq:mu2ass}
\end{multline}
In general, (\ref{eq:mu2ass}) would be implicit because of $\Asum\kdxplus{1}$ and $\Fsum\kdxplus{1}$'s dependence on $\state\kdxplus{1}$, and thus $u\kdx $, via the selector functions
\begin{equation}
\begin{aligned}
    K_q(\locvec(\state\kdxplus{1})) &= 
    0^{
    \prod_{i=1}^{\sigvecset_q}
    \norm{\sigvec_{q,i}-H\left(
    P\state\kdxplus{1}
    - \offsetvec
    \right)}}
    \\
    &= 
    0^{
    \prod_{i=1}^{\sigvecset_q}
    \norm{\sigvec_{q,i}-H\left(
    P\left(
    \Asum\kdx \state\kdx  + \Bsum\kdx u\kdx  + \Fsum\kdx 
    \right)
    - \offsetvec
    \right)}}
\end{aligned}
\label{eq:implicitK}
\end{equation}
However, 
under
\ref{C:outputSwitch} (in combination with \ref{C:locIndep}) this becomes
\begin{equation}
\hspace{-0.03in}
    K_q(\locvec(\state\kdxplus{1})) = 
    0^{
    \prod_{i=1}^{\sigvecset_q}
    \norm{\sigvec_{q,i}-H\left(
    P_oC\kdxplus{1}\left(
    \Asum\kdx \state\kdx  + \Fsum\kdx 
    \right)
    - \offsetvec_o + P_oG\kdxplus{1}
    \right)}}
    \label{eq:explicitK}
\end{equation}
which is not a function of $u\kdx $ and is thus explicit.

The reduction of (\ref{eq:implicitK}) to (\ref{eq:explicitK}) relies on the fact that $C\kdxplus{1}\Bsum\kdx ~=0$ $\forall k$. This is true for $\dmu=2$ systems under \ref{C:locIndep} because $\dmu=2\implies\mu_c>1$ by Lemma \ref{lem:mu1}.
\end{proof}

\begin{remark}[$\mu_c$, $\dmu$ Relationship]
Note that unlike for relative degrees of 0 and 1, $\mu_c=2\centernot\implies\dmu=2$ under assumptions \ref{C:reachable}-\ref{C:outputSwitch}. If $n_x>2$, there exists systems for which $\mu_c=2$ but $\dmu>2$ due to inter-location dynamics.
\end{remark}

\begin{remark}[Time Invariance]
Relaxation of 
``output function time-invariance'' 
is often
practically admissible (i.e. replacing $C$, $G$, with potentially time-varying $C\kdx$, $G\kdx$ in \ref{C:outputSwitch}).
Time-invariance is assumed purely for formal satisfaction of Definition \ref{def:PWAsys} and (\ref{eq:localization}), 
which use constant state space partitioning.
The relaxation uses constant output space partitioning but allows time-varying $P$, $\offsetvec$. 
This requires changes to definitions, but not 
to proofs 
(consider that Definition \ref{def:PWAsys} already allows much time-varying switching because time can be made into a state).
Conversely, \ref{C:locIndep} (\emph{location}-invariance) ensures 
the \ref{C:outputSwitch} equations are explicit,
and may not be relaxable.
\end{remark}

\newpage
\section{Stable Inversion of \PWA{} Systems}
\label{sec:stabinv}
\subsection{Exact Stable Inversion}

For many \PWA{} systems, propagating the 
inverse system
forward in time from an initial state 
at time $k=k_0$ 
with a bounded reference $y\kdxplus{\dmu}=r\kdxplus{\dmu}$
will yield a 
$u\kdx$ trajectory that is bounded for all $k\geq k_0$
and suitable for feedforward control.
However, inverse 
system instabilities may arise from NMP component dynamics, 
causing $u\kdx$ to become unbounded under this conventional 
propagation,
despite the bounded reference.
In such cases, a bounded $u\kdx$ may still be achievable on a bi-infinite timeline via stable inversion.
Formally, the stable inversion problem may be given as follows.
\begin{definition}[\PWA{} Stable Inversion Problem Statement]
\label{def:SIproblem}
Given an explicit inverse \PWA{} system representation
\begin{IEEEeqnarray}{RL}
\eqlabel{eq:inverse}
\IEEEyesnumber
\state\kdxplus{1} &= \inv{\Asum}\kdx\state\kdx + \inv{\Bsum}\kdx y\kdxplus{\dmu} + \inv{\Fsum}\kdx
\IEEEyessubnumber\label{eq:inverseState}
\\
u\kdx &=\inv{\Csum}\kdx\state\kdx + \inv{\Dsum}\kdx y\kdxplus{\dmu}  +\inv{\Gsum}\kdx
\IEEEyessubnumber\label{eq:inverseOutput}
\end{IEEEeqnarray}
and a 
bounded
reference trajectory $y\kdxplus{\dmu}=r\kdxplus{\dmu}$ known for all $k\in\integer$, a two point boundary value problem is formed by (\ref{eq:inverseState}) and the boundary conditions $\state_{-\infty}=\state_\infty=0$.
The solution to the stable inversion problem is the bounded bi-infinite time series $u\kdx\in\real$ $\forall k$,
which is
generated by (\ref{eq:inverseOutput}) and the bounded bi-infinite solution $\state\kdx$ to the boundary value problem.
\end{definition}

The following assumptions on system parameter boundedness and boundary conditions are common in some form across much stable inversion literature.
\begin{enumerate}[label=(A\arabic*),leftmargin=*]
\resume{listC}
\item
\label{C:boundedSystem}
There exists a supremum to the norms of the 
inverse system's output function matrices:
\begin{equation}
\sup_k\norm{\inv{\Csum}\kdx}
 \in \real
 \quad
 \sup_k\norm{\inv{\Dsum}\kdx}
 \in \real
 \quad
 \sup_k\norm{\inv{\Gsum}\kdx}
 \in \real
\end{equation}
Any vector norm may be 
used. The
matrix norm is that induced by the vector norm 
(i.e. the operator norm).

\item
\label{C:refdecay}
The forcing 
part of the inverse's state function decays
to zero at the extremities of the bi-infinite time series:
\begin{multline}
    \hspace{-0.125in}
    \forall \varepsilon \in \real_{>0}
    \quad
    \exists \eta_1, \eta_2 \in\integer
    \quad
    \text{ s.t. }
    \\
    \hspace{-1in}
    \norm{\inv{\Bsum}\kdx y\kdxplus{\dmu} + \inv{\Fsum}\kdx}
    <\varepsilon
    \quad
    \forall k\in(-\infty,\eta_1\rrbracket\cup\llbracket \eta_2,\infty)
\end{multline}
\suspend{listD}
\end{enumerate}

Additionally, stable inversion of \PWA{} systems involves two challenges not faced 
with
linear systems. First, 
the dynamics of all locations and 
the inter-location dynamics must
be
simultaneously 
accounted for when decoupling the stable and unstable system modes. 
Second, there must be a way to manage switching in the two partial system evolutions.
These challenges are manifested in 
the following assumptions.
\begin{enumerate}[label=(A\arabic*),leftmargin=*]
\resume{listD}
\item
\label{C:decouplable}
There exists a similarity transform matrix $V\in\real^{n_x\times n_x}$ that decouples the stable and unstable modes of 
(\ref{eq:inverseState}).
Formally this decoupling can be expressed as
\begin{align}
    V \inv{\Asum}\kdx V^{-1}=\begin{bmatrix}[1.5] 
    \Asumdc^\stab\kdx & 0_{n_\stab \times n_\unstab}
    \\
    0_{n_\unstab\times n_\stab} & \Asumdc^\unstab\kdx
    \end{bmatrix}
    \quad \forall k
    \label{eq:decouplable}
\end{align}
where 
$0_{\circ \times \bullet}$ is a $\circ$-by-$\bullet$ matrix of zeros,
$n_\stab$ is the number of stable modes, $n_\unstab$ is the number of unstable modes, $n_\stab+n_\unstab=n_x$,
$\Asumdc^\unstab\kdx$ has all eigenvalue magnitudes $>1$ $\forall k$,
and the free systems
\begin{align}
    \chi\kdxplus{1}^\stab = \decoup{\Asum}\kdx^\stab \chi\kdx^\stab
    \qquad
    \chi\kdxplus{1}^\unstab = \left(\decoup{\Asum}\kdx^\unstab\right)^{-1}\chi\kdx^\unstab
    \label{eq:free}
\end{align}
with 
appropriately 
sized
state vectors $\chi^\stab$, $\chi^\unstab$ 
are globally uniformly asymptotically stable about the 
origin. (See \cite{Chen2020} for theorems on sufficient conditions for uniform asymptotic stability of systems of the form (\ref{eq:free})).
\suspend{listE}
\end{enumerate}
\addtocounter{listE}{1}
\begin{enumerate}[label= (A\arabic{listE}\alph*),leftmargin=*]
    \item
    \label{C:switchStable}
    Switching is exclusively dependent on 
    stable modes:
    \begin{equation}
    PV^{-1}=\begin{bmatrix}
    \decoup{P}^\stab  & 0_{n_P\times n_\unstab}
    \end{bmatrix}
    \end{equation}
    \item
    \label{C:switchUnstable}
    Switching is exclusively dependent on 
    unstable modes and 
    all unstable states 
    arising from (\ref{eq:inverse})
    are reachable in one time step from some predecessor state 
    $\forall k$:
    \begin{multline}
    \hspace{-0.15in}
    \left(
    PV^{-1}=\begin{bmatrix} 0_{n_P\times n_\stab} &  \decoup{P}^\unstab \end{bmatrix}
    \right)
    \,\,
    \\
    \hspace{-0.15in}
    \land
    \,\,
    \left(
    \forall k,\,\,\forall \statedc^\unstab\kdxplus{1}\in\mathcal{X}^\unstab\subseteq\real^{n_\unstab}  \,\,\, \text{Pre}(\{\statedc^\unstab\kdxplus{1}\})\neq \emptyset
    \right)
    \end{multline}
    where $\statedc^\unstab\kdx=
    \begin{bmatrix} 0_{n_\unstab\times n_\stab}& I_{n_\unstab\times n_\unstab} \end{bmatrix}
    V\state\kdx$.
    $\mathcal{X}^\unstab$ is 
    a set containing at least all solution values of $\statedc^\unstab\kdxplus{1}$ (see Section \ref{sec:practicalSI} for elaboration).
    $\text{Pre}(\bullet)$ is the set of predecessor states whose one-step successors belong to the set $\bullet$. 
\end{enumerate}

As implied by the separation of 
(A\arabic{listE})
into two opposing assumptions, the challenges associated with switching management precipitate different stable inversion procedures for the stable-mode-dependent 
switching 
and unstable-mode-dependent 
switching 
cases. In general, the trajectories of the modes 
upon which switching is dependent 
are computed first.
This allows the switching signal for the overall system to be computed and given as an exogenous input to the 
propagation
of the remaining modes,
ensuring all modes are correctly computed with the same location at each point in time.

The theorems for these cases are supported by the following notation for the decoupled system in addition to the above-defined 
$\Asumdc^\stab\kdx$, $\Asumdc^\unstab\kdx$, $\Pdc^\stab$, $\Pdc^\unstab$.
\begin{equation}
\begin{aligned}
    \statedc\kdx^\stab &\coloneqq \extracts V\state\kdx
    &
    \Bsumdc\kdx^\stab &\coloneqq \extracts V\inv{\Bsum}\kdx
    &
    \Fsumdc\kdx^\stab &\coloneqq \extracts V\inv{\Fsum}\kdx
    \\
    \statedc\kdx^\unstab&\coloneqq \extractu V\state\kdx
    &
    \Bsumdc\kdx^\unstab &\coloneqq \extractu V\inv{\Bsum}\kdx
    &
    \Fsumdc\kdx^\unstab &\coloneqq \extractu V\inv{\Fsum}\kdx
\end{aligned}
\end{equation}
\begin{equation}
    \extracts \coloneqq \begin{bmatrix} I_{n_\stab\times n_\stab} & 0_{n_\stab\times n_\unstab} \end{bmatrix}
    \qquad
    \extractu \coloneqq \begin{bmatrix} 0_{n_\unstab\times n_\stab}& I_{n_\unstab\times n_\unstab} \end{bmatrix}
\end{equation}
where
$I_{\bullet\times\bullet}$ is the square identity matrix of dimension $\bullet$.

\begin{theorem}[\PWA{} Stable Inversion with Stable-Mode-Dependent Switching]
\label{thm:SIstable}
Given an explicit inverse \PWA{} system (\ref{eq:inverse}) satisfying \ref{C:boundedSystem}-\ref{C:decouplable} and \ref{C:switchStable}, 
the solution to the stable inversion problem exists and can be found by
first computing the stable mode time series $\statedc^\stab\kdx$ and location time series $\locvec\kdx$ $\forall k$ forwards in time via
\begin{align}
    \locvec\kdx &= H\left( \decoup{P}^\stab\statedc^\stab\kdx  -\offsetvec \right)
    \label{eq:locstable}
    \\
    \statedc^\stab\kdxplus{1} &= \Asumdc^\stab\kdx\statedc^\stab\kdx + \Bsumdc^\stab\kdx y\kdxplus{\dmu} + \Fsumdc^\stab\kdx
    \label{eq:forwardstable}
\end{align}
The location time series being now known, the unstable mode time series $\statedc^\unstab\kdx$ can be computed backwards in time via
\begin{equation}
    \statedc^\unstab\kdx = \left(\Asumdc^\unstab\kdx\right)^{-1}\left(\statedc^\unstab\kdxplus{1}-\Bsumdc^\unstab\kdx y\kdxplus{\dmu} - \Fsumdc^\unstab\kdx\right)
    \label{eq:backwardstable}
\end{equation}
with $\locvec\kdx$ input directly to 
the selector functions in
(\ref{eq:Mshorthand}).
Finally the solution $u\kdx$ is computed via (\ref{eq:inverseOutput}) with
${\state\kdx=V^{-1}\left[\left(\statedc^\stab\kdx\right)^T,\,\,\left(\statedc^\unstab\kdx\right)^T\right]^T}$.
\end{theorem}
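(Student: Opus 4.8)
The plan is to work in the transformed coordinates defined by the similarity matrix $V$ of \ref{C:decouplable}, in which \ref{eq:inverseState} block-diagonalizes into a stable subsystem for $\statedc^\stab\kdx$ and an unstable subsystem for $\statedc^\unstab\kdx$. The central structural observation is that under \ref{C:switchStable} the location map \ref{eq:localization} collapses to \ref{eq:locstable}, so the switching signal $\locvec\kdx$ depends only on $\statedc^\stab\kdx$. Consequently the stable recursion \ref{eq:forwardstable} is \emph{self-contained}: its right-hand-side matrices are selected by $\locvec\kdx$, which is itself a function of $\statedc^\stab\kdx$ alone, so the forward sweep can be carried out with no knowledge of the unstable modes. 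This breaks the apparent circular dependence between the two mode groups and is what makes the two-sweep procedure well defined; I would establish this reduction first.

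Next I would prove that the forward sweep \ref{eq:forwardstable} from the boundary condition $\statedc^\stab_{-\infty}=0$ produces a bounded bi-infinite trajectory that decays to $0$ as $k\to+\infty$. The free dynamics $\chi\kdxplus{1}^\stab=\Asumdc^\stab\kdx\chi\kdx^\stab$ of \ref{eq:free} are globally uniformly asymptotically stable by \ref{C:decouplable}, so the associated state-transition operator admits a uniform decay bound. Writing $\statedc^\stab\kdx$ by variation of constants as the superposition of this decaying homogeneous response and the accumulated forcing $\Bsumdc^\stab\kdx y\kdxplus{\dmu}+\Fsumdc^\stab\kdx$, I would combine the decay-at-the-extremities assumption \ref{C:refdecay} with the uniform stability bound to show the forced sum converges (is finite and bounded) for every $k$, and that both the homogeneous and forced contributions vanish as $k\to+\infty$. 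This yields boundedness over the whole bi-infinite axis and consistency with the boundary value $\statedc^\stab_\infty=0$, while simultaneously producing the location series $\locvec\kdx$ via \ref{eq:locstable}.

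With $\locvec\kdx$ now fixed, the unstable matrices $\Asumdc^\unstab\kdx,\Bsumdc^\unstab\kdx,\Fsumdc^\unstab\kdx$ are determined as exogenous time-varying data, and I would treat \ref{eq:backwardstable} as a recursion running backward in time. Its free dynamics are exactly $\chi\kdxplus{1}^\unstab=(\Asumdc^\unstab\kdx)^{-1}\chi\kdx^\unstab$, which \ref{C:decouplable} asserts is globally uniformly asymptotically stable; hence backward propagation is the \emph{stable} direction for these modes. Starting from $\statedc^\unstab_\infty=0$ and repeating the variation-of-constants/uniform-decay argument of the previous paragraph in reverse time, again invoking \ref{C:refdecay} for the decay of the transformed forcing, gives a bounded trajectory $\statedc^\unstab\kdx$ that vanishes as $k\to-\infty$, matching $\statedc^\unstab_{-\infty}=0$.

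Finally I would reassemble $\state\kdx=V^{-1}[(\statedc^\stab\kdx)^T,(\statedc^\unstab\kdx)^T]^T$ and verify that, because $V$ block-diagonalizes $\inv{\Asum}\kdx$ as in \ref{eq:decouplable} and both sweeps use the forcing induced by the \emph{same} $\locvec\kdx$ at each $k$, the reassembled $\state\kdx$ satisfies the original state recursion \ref{eq:inverseState} together with $\state_{\pm\infty}=0$; it is therefore a bounded solution of the two-point boundary value problem of Definition \ref{def:SIproblem}, establishing existence constructively. Boundedness of the resulting $u\kdx$ from \ref{eq:inverseOutput} then follows from boundedness of $\state\kdx$ and $y\kdxplus{\dmu}$ together with the supremum bounds of \ref{C:boundedSystem}. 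I expect the main obstacle to be the convergence/boundedness estimates of the two sweeps: one must upgrade the \emph{qualitative} uniform asymptotic stability of \ref{C:decouplable} into a \emph{quantitative}, summable decay bound on the state-transition operator and then show the bi-infinite forced sums converge at both extremities, which is precisely where \ref{C:refdecay} and the uniformity (as opposed to mere pointwise stability) of the free-system decay are essential.
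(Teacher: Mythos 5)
Your proposal is correct and follows essentially the same route as the paper's proof: the same structural pillars (switching collapses onto the stable modes via \ref{C:switchStable}, making the forward sweep self-contained; the two sweeps concatenate to (\ref{eq:inverseState}) via \ref{C:decouplable}; and \ref{C:refdecay} together with the stability of the free systems (\ref{eq:free}) yields the boundary conditions), followed by explicitness of the computation order for existence and \ref{C:boundedSystem} for boundedness of $u\kdx$. The only difference is one of rigor rather than route: you propose to carry out the decay and convergence estimates quantitatively (variation of constants, uniform exponential bounds on the transition operator), where the paper's proof simply asserts that the forced recursions decay to the form of (\ref{eq:free}) at the extremities and invokes \ref{C:decouplable}.
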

\begin{proof}
The prescribed formula
(\ref{eq:locstable})-(\ref{eq:backwardstable})
represents a solution to the 
boundary value problem of Definition \ref{def:SIproblem}
because
\begin{itemize}[leftmargin=*]
\item
by \ref{C:switchStable}, (\ref{eq:locstable}) is equivalent to (\ref{eq:localization}), 
\item
by \ref{C:decouplable}, the concatenated evolutions of (\ref{eq:forwardstable}) and (\ref{eq:backwardstable}) are equivalent to (\ref{eq:inverseState}), and
\item
by 
\ref{C:refdecay}
and the operator norm property $\norm{Mv}\leq\norm{M}\norm{v}$ for 
matrix $M\in\{\extracts,\extractu\}$ and vector $v=\inv{\Bsum}\kdx y\kdxplus{\dmu} + \inv{\Fsum}\kdx$,
(\ref{eq:forwardstable}) and (\ref{eq:backwardstable}) decay to the form of (\ref{eq:free}) in the limits as $k$ approaches $\infty$ or $-\infty$, and thus by \ref{C:decouplable} the boundary conditions at these limits are satisfied.
\end{itemize}
The
boundary value problem
solution
$\state\kdx$
and stable inversion solution $u\kdx$ are
guaranteed to exist because
\begin{itemize}[leftmargin=*]
\item
(\ref{eq:locstable})-(\ref{eq:backwardstable}) and (\ref{eq:inverseOutput}) are all explicit functions with all variables in the right-hand side known due to the order of time series computation, 
\item
boundedness of $\state\kdx$ as $k\rightarrow\pm\infty$ implies boundedness of $\state\kdx$ $\forall k$ by the facts that $\inv{\Asum}\kdx$, $\inv{\Bsum}\kdx$, $\inv{\Fsum}\kdx$ must be real (Definition \ref{def:PWAsys}, Theorems \ref{thm:mu0inv} and \ref{thm:genInv}),
and $\Asumdc^\unstab\kdx$ is guaranteed invertible by the eigenvalue condition of \ref{C:decouplable},
and
\item 
by \ref{C:boundedSystem} bounded of $\state\kdx$ and $y\kdxplus{\dmu}$ trajectories yield a bounded $u\kdx$ solution trajectory in (\ref{eq:inverseOutput}).
\end{itemize}
\end{proof}

\begin{theorem}[\PWA{} Stable Inversion with Unstable-Mode-Dependent Switching]
\label{thm:SIunstable}
Given an explicit inverse \PWA{} system (\ref{eq:inverse}) satisfying \ref{C:boundedSystem}-\ref{C:decouplable} and \ref{C:switchUnstable}, the solution to the stable inversion problem exists and can be found 
as follows.
First solve the implicit backward-in-time evolution of the unstable modes, (\ref{eq:backwardstable}), for $\statedc^\unstab\kdx$ at each time step using any of the applicable algorithms (e.g. brute force search over all locations or computational geometry methods 
\cite{Rakovic2006},
see Section \ref{sec:practicalSI} for elaboration).
For each time step at which one of the potentially multiple solution values of $\statedc^\unstab\kdx$ is chosen (any selection method is valid), the location vector $\locvec\kdx$ may be computed by
\begin{equation}
    \locvec\kdx=H\left(\Pdc^\unstab\statedc^\unstab\kdx-\offsetvec\right)
\end{equation}
The location time series being computed, $\locvec\kdx$ may be directly plugged in to
the selector functions in
(\ref{eq:Mshorthand}) to make the forward-in-time evolution of the stable modes, (\ref{eq:forwardstable}), explicit such that it can be evaluated at each time step for $\statedc^\stab\kdx$.
The solution $u\kdx$ is then computed via (\ref{eq:inverseOutput}) with 
$\state\kdx=V^{-1}\left[\left(\statedc^\stab\kdx\right)^T,\,\left(\statedc^\unstab\kdx\right)^T\right]^T$.
\end{theorem}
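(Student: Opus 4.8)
The plan is to mirror the proof of Theorem~\ref{thm:SIstable}, interchanging the roles of the stable and unstable modes and of forward and backward propagation, and then to isolate the one genuinely new difficulty: here the backward sweep over the unstable modes is \emph{implicit} rather than explicit.

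First I would verify that the prescribed procedure yields a solution of the two-point boundary value problem of Definition~\ref{def:SIproblem}, via three equivalences paralleling the bulleted argument in the stable case. By the first clause of \ref{C:switchUnstable}, namely $PV^{-1}=[\,0_{n_P\times n_\stab}\ \ \Pdc^\unstab\,]$, the location rule $\locvec\kdx=H(\Pdc^\unstab\statedc^\unstab\kdx-\offsetvec)$ coincides with the original localization (\ref{eq:localization}). By \ref{C:decouplable}, concatenating the backward unstable recursion (\ref{eq:backwardstable}) with the forward stable recursion (\ref{eq:forwardstable}) and applying $V^{-1}$ reconstructs the inverse state equation (\ref{eq:inverseState}). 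Finally, invoking \ref{C:refdecay} together with the operator-norm bound $\norm{Mv}\leq\norm{M}\norm{v}$ for $M\in\{\extracts,\extractu\}$ and $v=\inv{\Bsum}\kdx y\kdxplus{\dmu}+\inv{\Fsum}\kdx$, both recursions decay to the free systems (\ref{eq:free}) as $k\to\pm\infty$, so by \ref{C:decouplable} the boundary conditions at these limits are satisfied. These steps are essentially verbatim transcriptions from Theorem~\ref{thm:SIstable}.

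The substantive step is existence. Once the location series $\locvec\kdx$ is known, the forward stable recursion (\ref{eq:forwardstable}) is explicit and propagates without obstruction, exactly as before. The difficulty is that, unlike the stable-switching case, the locations \emph{cannot} be precomputed: the matrices $\Asumdc^\unstab\kdx$, $\Bsumdc^\unstab\kdx$, $\Fsumdc^\unstab\kdx$ appearing in (\ref{eq:backwardstable}) depend on $\locvec\kdx$, which by the first clause of \ref{C:switchUnstable} depends on the unknown $\statedc^\unstab\kdx$ itself. Hence each backward step is a coupled fixed-point problem in $(\statedc^\unstab\kdx,\locvec\kdx)$. The resolution is to recognize this as exactly the predecessor problem: given $\statedc^\unstab\kdxplus{1}$, a consistent pair is any $\statedc^\unstab\kdx$ whose one-step image under the (location-dependent) forward form of (\ref{eq:backwardstable}) equals $\statedc^\unstab\kdxplus{1}$. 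The second clause of \ref{C:switchUnstable}, asserting $\text{Pre}(\{\statedc^\unstab\kdxplus{1}\})\neq\emptyset$ for every $\statedc^\unstab\kdxplus{1}\in\mathcal{X}^\unstab$, together with the fact that $\mathcal{X}^\unstab$ contains every successor produced by the recursion, guarantees at least one such predecessor at each step. The backward recursion is therefore well posed, with the possible multiplicity of predecessors absorbed into the freedom to use any selection rule. This is the step I expect to be the main obstacle, and it is precisely where assumption \ref{C:switchUnstable} is consumed.

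It remains to transfer the boundedness conclusions, which require no modification. Because $\inv{\Asum}\kdx$, $\inv{\Bsum}\kdx$, $\inv{\Fsum}\kdx$ are real (Definition~\ref{def:PWAsys}, Theorems~\ref{thm:mu0inv}--\ref{thm:genInv}) and $\Asumdc^\unstab\kdx$ is invertible by the eigenvalue condition of \ref{C:decouplable}, decay of $\state\kdx$ at the extremities forces boundedness for all finite $k$; then \ref{C:boundedSystem} converts the bounded $\state\kdx$ and bounded reference into a bounded $u\kdx$ via (\ref{eq:inverseOutput}), completing the argument.
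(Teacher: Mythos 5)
Your proposal is correct and follows essentially the same route as the paper's proof: it reuses the three-part argument of Theorem~\ref{thm:SIstable} with the first clause of \ref{C:switchUnstable} substituting for \ref{C:switchStable}, invokes the second clause of \ref{C:switchUnstable} (with $\mathcal{X}^\unstab$ containing all solution successors) to guarantee the predecessor set of the implicit backward recursion (\ref{eq:backwardstable}) is non-empty, and transfers the boundedness conclusions unchanged. The paper's own proof is merely a terser statement of exactly these steps, so no gap exists; your added explanation of why the backward sweep is a coupled fixed-point problem in $(\statedc^\unstab\kdx,\locvec\kdx)$ matches the discussion the paper defers to Section~\ref{sec:practicalSI}.
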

\begin{proof}
\begin{sloppypar}
The prescribed formula represents a solution to the stable inversion problem for the same reasons as Theorem \ref{thm:SIstable}, but with the first proposition of \ref{C:switchUnstable}---i.e. $PV^{-1}=
\left[
0_{n_P\times n_\stab} 
,\,\,
\decoup{P}^\unstab
\right]
$---used in place of \ref{C:switchStable}. The solution is guaranteed to exist because
\end{sloppypar}
\begin{itemize}[leftmargin=*]
\item
the second proposition of \ref{C:switchUnstable} guarantees the solution set of (\ref{eq:backwardstable}) is non-empty,
\item
the \PWA{} nature of the original inverse system (\ref{eq:inverse}) enables application of existing algorithms guaranteed to find $\text{Pre}(\{\statedc^\unstab\kdxplus{1}\})$ and thus solve (\ref{eq:backwardstable})
\cite{Rakovic2006},
and
\item
with solutions to (\ref{eq:backwardstable}) chosen $\forall k$, the remaining equations are explicit with bounded outputs for the same reasons as in Theorem \ref{thm:SIstable}.
\end{itemize}
\end{proof}

Note that while
\cite{Rakovic2006} focuses on \PWA{} systems with time-invariant components, 
for stable inversion
only the one-step predecessor set need be computed at each time step. 
Thus, because a time-varying system's propagation is equivalent to a sequence of different time-invariant systems' one-step propagations,
the algorithms of \cite{Rakovic2006} are still applicable.

\subsection{Practical Considerations}
\label{sec:practicalSI}

The most immediate issue with Theorems \ref{thm:SIstable} and \ref{thm:SIunstable} is that, while they provide an exact solution to the stable inversion problem, their procedures cannot be implemented because 
the time series involved are infinite.
This issue applies to past works on stable inversion as well, and the same means of addressing the issue is taken here.
Namely, an approximate solution is 
obtained
by prescribing a finite reference $\refr\kdxplus{\dmu}$ for $k\in\llbracket 0,\,\,N-\dmu \rrbracket$ and strictly enforcing the boundary conditions on only the 
initial and terminal states of the stable and unstable mode evolutions, respectively.
In other words, $\statedc^\stab_{0}=0$ and $\statedc^\unstab_{N-\dmu}=0$ but $\statedc^\stab_{N-\dmu}$ and $\statedc^\unstab_{0}$ may be nonzero.

In
general the closer $\statedc^\stab\kdx$ and $\statedc^\unstab\kdx$ come to decaying to zero by $k=N-\dmu$ and $k=0$, respectively, the higher quality the approximation of the $u\kdx$ time series will be, i.e. the closer $u\kdx$ comes to returning $y\kdxplus{\dmu}=\refr\kdxplus{\dmu}$ when input to the original system from which the inverse (\ref{eq:inverse}) was derived.
To achieve this high quality approximation, one may specify $\refr\kdxplus{\dmu}$ to begin and end with a number of 
``force-zeroing''
elements to allow space for the $u\kdx$ time series to contain the pre- and post-actuation typically necessary for the control of NMP systems.
``Force-zeroing'' elements are those with values such that the forcing part of the inverse's state function (\ref{eq:inverseState}) is zero.
In other words, leading and trailing force-zeroing reference elements support approximate satisfaction of \ref{C:refdecay}, which 
supports approximate satisfaction of the $\state\kdx$ boundary conditions, which reduces 
inversion error.
The number of 
force-zeroing
elements required to achieve a satisfactorily low error is case-dependent. One typical heuristic is to ensure that the durations of the leading and trailing 
force zeroing are
approximately equal to the system settling time.

In addition to 
\ref{C:refdecay} and 
the 
practical 
need for
finite references with 
force zeroing, the case of unstable-mode-dependent switching and \ref{C:switchUnstable} 
merits
special 
attention 
regarding
implementation.

First, consider methods to solve
the implicit equation
(\ref{eq:backwardstable}) at each time step. Any method will consist of two parts: identifying a set of valid solutions and then choosing one of them.
Choosing a solution can be formalized as the minimization of some cost function, 
$\norm{\statedc^\unstab\kdxplus{1}-\statedc^\unstab\kdx}$ being a straightforward and universally applicable option. If the inverse system (\ref{eq:inverse}) has exclusively unstable modes, input-based costs such as $\norm{u\kdx}$ and $\norm{u\kdxplus{1}-u\kdx}$ may also be used.
All these optimizations are
combinatorial, i.e. 
the decision variable can only take on values from a particular finite set.
For the problem considered here, 
the cardinality of this 
set 
is at most
the number of locations $\regionQuant$ in the system.
This is because each location contains at most one solution to (\ref{eq:backwardstable}) due to the eigenvalue condition in \ref{C:decouplable} making $(\Asumdc^\unstab\kdx)^{-1}$ full rank and thus one-to-one.

Having an upper bound of one solution per location leads to a direct method for deriving the set of valid solutions to (\ref{eq:backwardstable}). For each location $\Qset_q\in\Qset$, compute
\begin{equation}
    \left(\decoup{A}^\unstab_{q,k}\right)^{-1}
    \left(\statedc^\unstab\kdxplus{1}-\decoup{B}_{q,k}^\unstab \refr\kdxplus{\dmu}-\decoup{F}_{q,k}^\unstab\right)
\end{equation}
and check whether the result lies in $\Qset_q$. If so, the result is a solution to (\ref{eq:backwardstable}). 
Logic
relating to solution selection criteria may be incorporated to reduce computational cost, e.g. checking locations in order of proximity to the current location to avoid checking all locations in the case that the solution selection cost function is something like $\norm{\statedc^\unstab\kdxplus{1}-\statedc^\unstab\kdx}$.
Alternatively, $\text{Pre}\left(\left\{ \statedc^\unstab\kdxplus{1} \right\}\right)$ may be derived whole using the computational-geometry-supported algorithms of 
\cite{Rakovic2006}.
As noted in 
\cite{Rakovic2006},
the algorithm of least cost may be case-dependent.

Finally, consider verification of the existence of a solution to the stable inversion problem with unstable-mode-based switching.
The verification method recommended here is to first verify \ref{C:boundedSystem}-\ref{C:decouplable} and the first proposition of \ref{C:switchUnstable} directly, then run the procedure given in Theorem \ref{thm:SIunstable} for finding a solution.
If \ref{C:boundedSystem}-\ref{C:decouplable} and the first proposition of \ref{C:switchUnstable} have been verified, then the procedure is guaranteed to find a solution to (\ref{eq:backwardstable}) if a solution exists. Equivalently, failure to find a solution implies no solution exists.

This method is recommended over directly attempting to verify the second proposition of \ref{C:switchUnstable} because this 
proposition may be conservative, difficult to verify, and yield 
limited computational savings over the recommended method.
Conservativeness
and difficulty arise
from 
choosing the
set of possible $\statedc^\unstab\kdxplus{1}$ values, $\mathcal{X}^\unstab$.
It is unlikely 
for one
to 
have
knowledge of the 
solution
$\statedc^\unstab\kdxplus{1}$ values 
prior to actually solving the stable inversion 
problem 
(via Theorem \ref{thm:SIunstable} and a method for solving (\ref{eq:backwardstable}) as described above),
so $\mathcal{X}^\unstab$ may need to be set much larger than necessary to ensure it contains the solution trajectory.
Containment
is necessary for truth of the proposition to imply 
solution existence verification.
Conversely, the proposition may evaluate to false despite containing a true solution if $\mathcal{X}^\unstab$ also contains unreachable states.
Selection of $\mathcal{X}^\unstab$ may thus be 
delicate and challenging.

The expectation of high computational cost arises from the subtle discrepancy between the capabilities of established \PWA{} system verification methods and the second proposition of \ref{C:switchUnstable}.
Multiple methods exist for verifying the reachability/controllability of a \PWA{} system to a target set $\mathcal{X}^\unstab$
\cite{Rakovic2006,Bemporad2000}.
However, these methods typically verify whether 
at least one
element of the target set is reachable, whereas 
\ref{C:switchUnstable} requires
verification that every element of the target set is reachable.
Methods for complete reachable set computation (e.g. \cite[ch. 6]{Torrisi2003})  typically over-approximate the reachable set because this is conservative in safety verification, but it is anti-conservative in the present context.
In other words, the computational savings  
one might expect from
existing reachability/controllability verification schemes may not be 
available
for formal verification.
However, if ``informal'' verification is acceptable one may guess a set containing the solution values of $\statedc^\unstab\kdxplus{1}$ and check if the system's (potentially approximate) one-step reach set contains this $\mathcal{X}^\unstab_\textrm{guess}$ via the techniques of \cite[ch. 6]{Torrisi2003}. While guarantees of correctness are lost, such approximate verification may be useful if one has advance knowledge of what solution values are likely to be.

In short, 
the 
second
proposition
of \ref{C:switchUnstable} is useful for specifying a condition 
guaranteeing solution existence,
and thus for 
derivation and proof of Theorem \ref{thm:SIunstable}. 
It is \emph{not} recommended as a tool for existence verification, but this is not a significant loss because the procedure given in Theorem \ref{thm:SIunstable} for finding a solution is itself a valid verification tool.

Finally, consider the remaining stable inversion assumptions \ref{C:boundedSystem} and \ref{C:decouplable}.
Assumption \ref{C:boundedSystem} prevents unbounded growth from ``hiding'' in the time-varying matrices of the output function (\ref{eq:inverseOutput}), ensuring bounded $\state\kdx$ and $y\kdxplus{\dmu}$ implies bounded $u\kdx$.
This is easy to satisfy in practice because the time-variation allowed by Definition \ref{def:PWAsys} is arbitrary and the engineer has ultimate discretion to design the model (\ref{eq:PWAdef}). For example, if a $\dmu=0$ model has infinite growth of $\norm{\inv{\Dsum}\kdx}$ due to asymptotic decay of $\norm{D_{q,k}}$, a saturation-like limit to $D_{q,k}$ decay can be added at extremities of time to yield a nearly equivalent model satisfying \ref{C:boundedSystem}.
Assumption \ref{C:decouplable} ensures the stable and unstable modes can be propagated independently, but applies the restriction that a constant similarity transform must be able to perform the decoupling for all locations.
Finding a variable similarity transform to relax this constraint may be nontrivial. However, for many models based on physical systems the location dynamics are related, and the similarity transform yielding the Jordan form of one location's state matrix decouples the other locations' modes.

\section{Validation: Application to ILC}
\label{sec:val}

This section simulates the application of \ILILC{} and feedforward control to a \PWA{} system with \NMP{} components via the stable inversion theory of contribution \ref{contribution:SI} (Section \ref{sec:stabinv}), and thus also the conventional inversion theory of \ref{contribution:inversion}-\ref{contribution:uniqueness} (Section \ref{sec:exactinverse}).
Feedforward control alone 
uses
\ref{contribution:inversion}-\ref{contribution:SI}, and the
integration of \ILILC{} with \PWA{} stable inversion, \ref{contribution:sim}, demonstrates the performance improvement enabled by \ref{contribution:inversion}-\ref{contribution:SI} over alternative control techniques.

The validation scenario is an inkjet printhead positioning system using feedback and feedforward control simultaneously.
The complete list of simulations performed is
\begin{enumerate*}[label=(\alph*)]
\item
\label{sim:ILILC}
\ILILC{} with \PWA{} stable inversion,
\item
\label{sim:SI}
learning-free \PWA{} stable inversion (i.e. stable inversion without \ILILC{}),
\item
\label{sim:FB}
feedback-only 
Proportional-Derivative (PD)
control (i.e. zero feedforward input),
\item
gradient ILC, and
\item
\label{sim:P}
P-type ILC.
\end{enumerate*}
Simulations \ref{sim:ILILC}-\ref{sim:SI} represent this article's contribution; \ref{sim:FB}-\ref{sim:P} are benchmarks.
Gradient ILC is an optimization-based ILC framework that has been used to benchmark \ILILC{} in past works \cite{Spiegel2021}.
P-type ILC is used as 
an additional benchmark
because 
it
is among the most common forms of ILC used in industry,
and
is considered by many to be a form of ``model-free'' ILC.
Synthesis details for
all ILC schemes are given in Section \ref{sec:ILCSchemes}.

These simulations are
subject to a variety of model errors and other disturbances to validate
the stable-inversion-supported learning controller's practicality.
In other words, the controller is synthesized from a ``control model'' and applied to a ``truth model'' representing a physical system. The control model features mismatches in parameter values, sample rate, model order, and relative degree.
This model mismatch captures disturbances that can arise in the real operation of a printhead positioning system. For example, calibration errors in low-level motor control can result in flawed scaling of the voltage delivered to the 
motor; this
disturbance is captured by the ratio between the truth and control model gain magnitudes.

Additionally, while the physical system has virtually no output-measurable noise, 
this simulation
is subject to copious process and measurement noise. 
This is done as a preliminary test to ensure that noise does not corrupt the learning process beyond the remedial power of conventional filtering.

\subsection{Example System}
The truth model is based on 
a
physical desktop inkjet printhead positioning 
system,
pictured in Figure \ref{fig:photo}.
The input to this system is an applied motor voltage, $\controltotal\kdx$, and the output is the printhead position along a \SI{0.3}{\meter} guide rail, $y^{P}\kdx$, measured by a linear optical encoder with a resolution of \SI{50}{\micro\meter}.
The applied motor voltage $\controltotal\kdx$ is the sum of a feedback component, $\controlfb^{C}\kdx$, and feedforward component, $u\kdx$. 
Gaussian white process noise $\omega_{\controltotal}$ (zero mean, standard deviation \SI{0.03}{\volt}) and measurement noise $\omega_{y}$ (zero mean, standard deviation \SI{50}{\micro\meter}) are added at the input and output of the plant, respectively. This ultimately results in the block diagram of Figure \ref{fig:blockA3}.

\begin{figure}
    \centering
    \includegraphics[trim={0.5mm 0.2mm 0mm 0.2mm},clip]{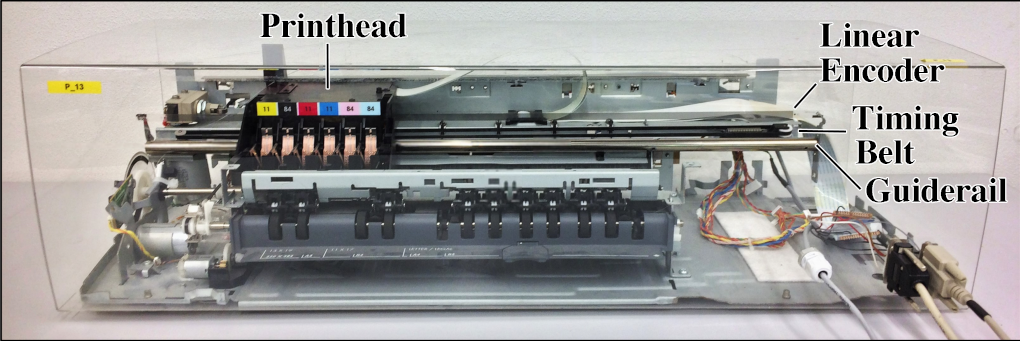}
    \caption{
    Photo of desktop inkjet printer with the case removed.
    The motor actuates the printhead motion along a guide rail via a timing belt, and the motion is measured by a linear optical encoder with resolution of \SI{50}{\micro\meter} (about 600 dots per inch).}
    \label{fig:photo}
\end{figure}

\begin{figure}
    \centering
    \includegraphics[width=\columnwidth]{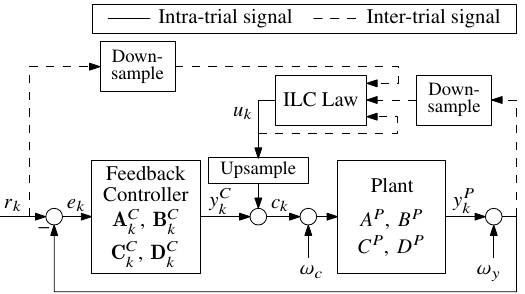}
    \caption{
    System block diagram. 
    The plant block uses the truth model of the printer system obtained by experimental system identification 
    while the 
    ILC law is synthesized using the control model. The downsample and upsample blocks account for the difference in sample period between the ILC law and the truth model.
    }
    \label{fig:blockA3}
\end{figure}

The plant models are Linear Time-Invariant (LTI), with the truth model being identified from historical experimental data.
The feedback controller is composed of a lowpass filter in series with a switching PD controller. 
All locations yield a DC Gain of 1 between the reference and the output, so integral control is not 
necessary from the typical steady-state-error perspective. The control model of this feedback controller has the state space representation
\begin{IEEEeqnarray}{RL}
\nonumber
    \AcmC &= \begin{bmatrix}
    0 & 1 & 0 \\
    -a_2 & -a_1 & 0 \\
    0 & 0 & 0
    \end{bmatrix}
\quad\,\,\,\,\,
    \BcmC = \begin{bmatrix}
    0 \\ 1 \\ 1
    \end{bmatrix}
\quad\,\,\,\,\,
    \DcmC=b\left(K_p+\frac{K_d}{T_s}\right)
    \\
\label{eq:feedbackSS}
    \CcmC&= 
    -b\begin{bmatrix}
    \frac{K_d(1+a_2)}{T_s} + K_pa_2
    &
    \frac{K_da_1}{T_s} + K_p(a_1-1)
    &
    0
    \end{bmatrix}
    \\
\nonumber
    K_p 
    &= 
    \begin{cases}
    K_{p,1} 
    &
    \left|e\kdxminus{1}\right| \leq e_{\text{switch}}
    \\
    K_{p,2} 
    &
    \left|e\kdxminus{1}\right| > e_{\text{switch}}
    \end{cases}
    \qquad
    e\kdxminus{1} = \begin{bmatrix} 0 & 0 & 1 \end{bmatrix}\xcm^C\kdx
\end{IEEEeqnarray}
where $\xcm^{C}\kdx\in\real^{n_{\xcm^{C}}}$ is the state vector, $a_1$, $a_2$, $b\in\real$ are lowpass filter parameters, $T_s\in\real_{>0}$ is the sample period in seconds, $e\kdx$ is the reference-output error, $e_{\text{switch}}$ is an error magnitude switching threshold, and $K_p$, $K_d\in\real_{>0}$ are the proportional and derivative gains. Hats, $\hat{}$, are used to distinguish control model matrices, state vectors, and outputs from those of the truth model.
For both the control and truth models, Table \ref{table:modelParams} specifies the plant in terms of pole, zero, and gain values, and specifies the controller in terms of its lowpass filter and PD control parameters.
See Appendix \ref{sec:A1} for further details.

\begin{table}
    \centering
    \caption{Simulation Model Parameters}
    \label{table:modelParams}
    \setstretch{1.1}
    \newcommand{\pzhrule}{\hhline{|>{\arrayrulecolor{lightgray}}->{\arrayrulecolor{black}}||>{\arrayrulecolor{lightgray}}->{\arrayrulecolor{black}}|>{\arrayrulecolor{lightgray}}->{\arrayrulecolor{black}}|}}
    \begin{tabular}{|c||c|c|}
\hline
            & Truth Model         & Control Model       \\ \hhline{|=#=|=|}
Plant Poles & $0.88\pm 0.37 i$    & $0.67 \pm 0.61 i$   \\ 
            \pzhrule
            & $1.00$              & $0.99$              \\ 
            \pzhrule
            & $1.00$              & $1.00$              \\ 
            \pzhrule
            & $0$                 & N/A                 \\ \hline
Plant Zeros & $-5.10$             & $33.10$             \\ 
            \pzhrule
            & $-0.44$             & $-2.21$             \\ 
            \pzhrule
            & $\phantom{-}0.16$              & $\phantom{-}0.16$              \\ \hline
Plant Gain  & \SI{2.42e-7}{}      & \SI{-2.38e-7}{}      \\ \hline
$a_1$       & $-1.65\phantom{0}$             & $-1.31\phantom{0}$             \\ \hline
$a_2$       & $\phantom{-}0.70\phantom{0}$              & $\phantom{-}0.50\phantom{0}$              \\ \hline
$b$         & $\phantom{-}0.027$             & $\phantom{-}0.093$             \\ \hline
$K_d$       & $3$                 & $3$                 \\ \hline
$K_{p,1}$   & $40$                & $40$                \\ \hline
$K_{p,2}$   & $160$               & $160$               \\ \hline
$e_{\text{switch}}$ & \SI{2}{\milli\meter} & \SI{2}{\milli\meter} \\ \hline
$T_s$       & \SI{0.001}{\second} & \SI{0.002}{\second} \\ \hline
\end{tabular}
\end{table}

To perform stable inversion of the 
dynamics from 
feedforward input to 
output,
a monolithic \PWA{} model
\sysDef{} encompassing the feedback controller \emph{and} plant is needed. That is,
\begin{IEEEeqnarray}{RL}
\label{eq:monox}
\IEEEyessubnumber
    \statemod\kdxplus{1} &= \Asummod\kdx\statemod\kdx + \Bsummod\kdx u\kdx + \Fsummod\kdx
    \\
    \label{eq:monoy}
\IEEEyessubnumber
    \ymod\kdx^P &= \Csummod\kdx\statemod\kdx + \Dsummod\kdx u\kdx + \Gsummod\kdx
\end{IEEEeqnarray}%
\begin{equation}
\begin{aligned}
    \Asummod\kdx &= 
    \begin{bmatrix}[1.375]
    \Acm^{P} - \Bcm^{P}\DcmC\Ccm^{P}
    &
    \Bcm^{P}\CcmC
    \\
    -\BcmC\Ccm^{P}
    &
    \AcmC
    \end{bmatrix}
    \qquad
    \Bsummod\kdx = 
    \begin{bmatrix}[1.375]
    \Bcm^{P} \\ 0_{n_{\xcm^{C}} \times 1}
    \end{bmatrix}
    \\
    \Csummod\kdx &=
    \begin{bmatrix}
    \Ccm^{P} & 0_{1\times n_{\xcm^{C}}}
    \end{bmatrix}
    \qquad
    \Dsummod\kdx = 0
    \\
    \Fsummod\kdx &=
    \begin{bmatrix}[1.375]
    \Bcm^{P}\DcmC \\ \BcmC
    \end{bmatrix}
    \refr\kdx
    \qquad \quad \,\,
    \Gsummod\kdx=0
    \qquad
    \statemod\kdx = 
    \begin{bmatrix}[1.375]
    {\xcm^{P}\kdx} \\ {\xcm^{C}\kdx}
    \end{bmatrix}
\end{aligned}
\end{equation}
where ($\Acm^{P}$, $\Bcm^{P}$, $\Ccm^{P}$, $\Dcm^{P}$) 
is the state space representation of the plant's control model,
with state vector $\xcm^{P}\kdx\in\real^{n_{\xcm^{P}}}$.
It is assumed that the feedforward input $u_k$ is equivalent between the truth and control model, so it is not hatted.
This system is \NMP{}, as it has all the zeros of the plant model given in Table\nolinebreak{} \ref{table:modelParams} (as well as additional zeros).

This
monolithic control model has two locations based on the switching of $K_p$ in $\CcmC$ and $\DcmC$. Let the location $q=1$ correspond to low error with $K_{p,1}$ and $q=2$ correspond to high error and $K_{p,2}$. Then the switching parameters are
\begin{equation}
\begin{aligned}
    \hat{P} &= \begin{bmatrix}
    0_{1\times n_{\xcm^{P}}+n_{\xcm^{C}}-1} & -1
    \\
    0_{1\times n_{\xcm^{P}}+n_{\xcm^{C}}-1} & \phantom{-}1
    \end{bmatrix}
    \qquad
    \hat{\offsetvec} = \begin{bmatrix}
    -e_{\text{switch}} \\ -e_{\text{switch}}
    \end{bmatrix}
    \\
    \sigvecsetmod_1 &= \left\{ \begin{bmatrix} 1\\1 \end{bmatrix} \right\}
    \qquad
    \sigvecsetmod_2 = \left\{ \, \begin{bmatrix} 1\\0 \end{bmatrix},\,\begin{bmatrix}0\\1\end{bmatrix}\, \right\}
\end{aligned}
    \label{eq:deltaCellCell}
\end{equation}
Note that $\locvec\kdx = [ 0, \,\, 0]^T$ is not reachable, and thus need not be included in 
$\sigvecsetmod_2$.

This monolithic model is of global dynamical relative degree $\dmu=1$ and satisfies \ref{C:reachable}-\ref{C:locIndep}, enabling the use of Corollary \ref{corollary:mu1} for derivation of the conventional inverse.
The resultant inverse system satisfies
\ref{C:boundedSystem}-\ref{C:switchStable},
enabling the use of 
stable inversion
for the generation of stable
inverse state trajectories.
The decoupling similarity transform $V$ is derived by 
MATLAB's \texttt{canon} applied to location 1's dynamics.

Finally, the reference $r\kdx$ is given in Figure \ref{fig:referenceA3}.

\begin{figure}
    \centering
    \includegraphics[scale=0.8]{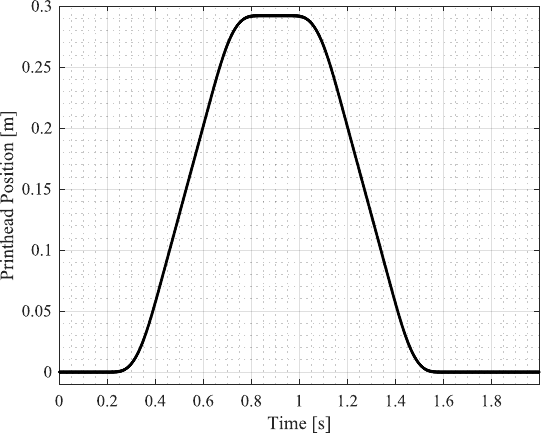}
    \caption{Reference. The reference is 1999 samples long for the truth model, and is downsampled to 1000 samples for the control model.}
    \label{fig:referenceA3}
\end{figure}

\subsection{ILC Schemes}
\label{sec:ILCSchemes}
Each ILC scheme in this article represents a different method of deriving the (potentially) trial-varying learning matrix $\learnMat_\tdx\in\real^{N-\dmu+1\times N-\dmu+1}$ of the ILC law
\begin{equation}
    \uLift_{\ldx+1} = \mathscr{E}\mathscr{Q}\left(\uLift_\ldx + \learnMat_\ldx\left(\rLift-\mathscr{Q}\yLift_\ldx\right)\right)
    \label{eq:ILCfiltered}
\end{equation}
where $\ell\in\integer_{\geq 0}$ is the iteration/trial index, $\uLift$, $\yLift$, $\rLift\in\real^{N-\dmu+1}$ are the lifted vectors (i.e. time series vectors)
\begin{align}
    \uLift
    &= 
    \begin{bmatrix}
    u_{0} & u_{1} & \cdots & u_{N-\dmu}
    \end{bmatrix}^T
    \label{eq:uLift}
    \\
    \yLift
    &=
    \begin{bmatrix}
    y_{\dmu} & y_{\dmu+1} & \cdots & y_{N}
    \end{bmatrix}^T
    \label{eq:yLift}
    \\
    \rLift &= \begin{bmatrix}
    r_{\dmu} & r_{\dmu+1} & \cdots & r_{N}
    \end{bmatrix}^T
\end{align}
and $N\in\integer_{>\dmu}$ is the number of time steps in a reference tracking trial (the number of samples is $N+1$).
Reference $\rLift$ is trial-invariant.
$\mathscr{Q}$, $\mathscr{E}\in\real^{N-\dmu+1\times N-\dmu+1}$ are 
lowpass and edge-effect filters, respectively (see Appendix \ref{sec:A2} for details). For the printhead system, $\uLift_0=0_{N-\dmu+1\times 1}$ and $y\kdx=y^{P}\kdx+\omega_y$.

\subsubsection{Invert-Linearize ILC}

\ILILC{} is
introduced in \cite{Spiegel2021} as
the application 
to ILC 
of the Newton-Raphson root finding algorithm with 
structure modifications enabling
numerical conditioning improvements. 
To derive $\learnMat_\ldx$, \ILILC{} calls for a lifted input-output model inverse $\ginvLift:\real^{N-\dmu+1}\rightarrow\real^{N-\dmu+1}$ taking in the \emph{measured} output $\yLift$ and outputting the control signal $\uLift$ \emph{predicted} to yield $\yLift$ when input to the true, unknown system.
Equivalently, $\ginvLift$ takes in the \emph{model} output $\yLiftmod$ and outputs the control signal $\uLift$ that yields $\yLiftmod$ when input to the known model \emph{approximating} the true system.

This $\ginvLift$ must be 
closed-form, such that the \ILILC{} learning matrix
\begin{equation}
    \learnMat_\ldx = \jacobian{\ginvLift}{\yLiftmod}\left(\yLift_\ldx\right)
    \label{eq:learnMat_ILILC}
\end{equation}
can be derived via an automatic differentiation tool such as CasADi \cite{Andersson2018}. Here, $\jacobian{\ginvLift}{\yLiftmod}$ is the Jacobian (in numerator layout) of $\ginvLift$ with respect to $\yLiftmod$,
as a function of output.
Furthermore, 
when the known system model is \NMP, $\jacobian{\ginvLift}{\yLiftmod}$ is likely to be ill-conditioned unless $\ginvLift$ is synthesized using stable inverse 
trajectories \cite{Spiegel2021}.
For a \PWA{} system satisfying \ref{C:boundedSystem}-\ref{C:switchStable}, Theorem \ref{thm:SIstable} provides a method for generating these closed-form state trajectories;
given 
known
trial-invariant conditions 
$\decoup{\state}_{0}^{\stab}=0$ and $\decoup{\state}_{N-\dmu}^{\unstab}=0$,
each element of the time series $\statemod\kdx$ and $\locvecmod\kdx$ is a function only of $\yLiftmod$. Then $\ginvLift$ is given by
\begin{equation}
    \ginvLift = \begin{bmatrix}
    \Csummod_{0}\statemod_{0} + \Dsummod_{0}\ymod_{\dmu} + \Gsummod_0
    \\
    \Csummod_{1}\statemod_{1} + \Dsummod_{1}\ymod_{1+\dmu} + \Gsummod_1
    \\
    \vdots
    \\
    \Csummod_{N-\dmu}\statemod_{N-\dmu} + \Dsummod_{N-\dmu}\ymod_{N} + \Gsummod_{N-\dmu}
    \end{bmatrix}
\end{equation}

\subsubsection{Gradient and Lifted P-Type ILC}

Gradient ILC uses
\begin{equation}
    \learnMat_\tdx = \gradILCgain \jacobian{\gmod}{\uLift}(\uLift_\tdx)^T
    \label{eq:learnMat_grad}
\end{equation}
where $\gradILCgain\in\real_{>0}$ is the learning gain and $\gmod:\real^{N-\dmu+1}\rightarrow\real^{N-\dmu+1}$ is the lifted system input-output model
\begin{equation}
\yvecmod = \gmod(\uLift)
\end{equation}
which can be synthesized from a 
model of the form (\ref{eq:sysDefPWA}) as
\begin{IEEEeqnarray}{RL}
\eqlabel{eq:g}
\IEEEyesnumber
\IEEEyessubnumber
\label{eq:gelementdef}
\yvecmod^i &= \gmod^i\left(\uLift\right)=\ymod_{\dmu+i-1}
\\
\ymod\kdx &= 
\Csummod\kdx
\left(\left( \prod^{k-1}_{m=0}\Asummod_m\right)\statemod_0 
\right.
\IEEEyessubnumber\label{eq:ynow}
\\
\nonumber
& \qquad
\left.
+
\sum^{k-1}_{s=0}\left( \left( \prod^{k-1}_{m=s+1}\Asummod_m\right)\left(\Fsummod_s + \Bsummod_s u_s \right)\right)\right)
\Dsummod\kdx u\kdx + \Gsummod\kdx
\end{IEEEeqnarray}
where the superscript $i$ in (\ref{eq:gelementdef}) denotes the vector element, indexing from 1, and $\dmu+i-1$ can be directly substituted for $k$ in (\ref{eq:ynow}).
Because the initial condition $\statemod_0$ is assumed known and trial-invariant, 
(\ref{eq:g}) 
is
a function only of $\uLift_\tdx$.

P-type ILC as described in \cite{Ratcliffe2005} manifests in lifted form as the trial invariant learning matrix
\begin{equation}
    \learnMat_\tdx = \PILCgain I_{N-\dmu+1 \times N-\dmu+1}
    \qquad
    \forall \tdx
\end{equation}
where 
$\PILCgain$ is a constant scalar.
This $\learnMat_\tdx$ 
can be plugged into the filtered learning law (\ref{eq:ILCfiltered}).
At this time, there is no literature prescribing a stable synthesis procedure for the P-type ILC of \PWA{} systems. In fact, the literature lacks application of either gradient ILC or P-type ILC to \PWA{} models.
However, because of its simplicity it is ubiquitous in industry and often synthesized heuristically, much like the PID feedback control for which it was named. Thus, P-type ILC makes for an important benchmark.

See Appendix \ref{sec:A2} for $\gradILCgain$ and $\PILCgain$ tuning.

\subsection{Assessment Methods}
\label{sec:methods5}
The presented stable inversion theory's ability to derive the inverse of a non-minimum phase \PWA{} model is tested by applying $\uLift=\ginvLift(\rLift)$ to the control model.
Then, to assess a more practical utility, 
five
independent simulations with the truth model are performed.
First,
a simulation is run with the feedforward input fixed to zero for all time, yielding a feedback-only 
PD control
simulation to serve as a baseline against which the four feedforward controllers can be compared.
Next,
learning-free stable inversion is applied to the truth model.
In other words, a simulation is run with $\uLift=\ginvLift(\rLift)$. 
The remaining three simulations each use one of the ILC techniques described in Section \ref{sec:ILCSchemes} (\ILILC{}, gradient ILC, or P-type ILC) with 9 trials (8 learning operations).

The primary metric for assessing control performance in a given trial is the normalized root mean square error (NRMSE) of the truth model, defined as
\begin{equation}
    \text{NRMSE} = 
    \frac{\RMS_{k\in\llbracket\dmu,N\rrbracket}\left(e\kdx\right)}{\max_{k\in\llbracket\dmu,N\rrbracket}\left(\left| r\kdx\right|\right)}
    =
    \frac{1}{\norm{\rLift}_\infty}
    \frac{\norm{\rLift-\yLift}_2}{\sqrt{N-\dmu+1}}
\end{equation}
The peak error magnitude
$\norm{\rLift-\yLift}_\infty$
is also considered.

\subsection{Results and Discussion}

When $\uLift=\ginvLift(\rLift)$ is input to the noise-free control model \controlModel{} from which it was derived, the resulting NRMSE and peak error magnitude are 
$10^{-7}$
and
\SI{49}{\nano\meter}. 
This is nearly zero compared to the other simulation errors, tabulated in Table  \ref{table:truthsim}, and what error there is can be attributed to the approximation error expected of finite-time stable inversion, as discussed in Section \ref{sec:practicalSI}.
This validates the fundamental theoretical contributions of this article: the stable inversion---and thus also conventional inversion---of \PWA{} systems.

To analyze the more practical application of these techniques to the truth model, Figure \ref{fig:NRMSEA3} plots the
ILC scheme NRMSEs evolving over the iteration process,
and
Figure \ref{fig:timeseriesA3combined}
plots
the input and error time series for the five simulations. 
The NRMSE and peak error magnitude for these simulations are tabulated in Table \ref{table:truthsim}.

\begin{table}
\caption{Simulation NRMSE and Peak Error Magnitude (PEM)}
    \label{table:truthsim}
    \centering
    \setstretch{1.2}
    \begin{tabular}{|lcc|}
\hline
Simulation Type                & NRMSE         & PEM                     \\ \hhline{|===|}
Feedback-Only 
(PD Control)
                               & \SI{5.6e-3}{} & \SI{3.9}{\milli\meter}\phantom{0} \\ \hline
Learning-Free Stable Inversion & \SI{5.4e-3}{} & \SI{2.8}{\milli\meter}\phantom{0} \\ \hline
P-type ILC - Final Trial       & \SI{3.1e-3}{} & \SI{2.1}{\milli\meter}\phantom{0} \\ \hline
Gradient ILC - Final Trial     & \SI{1.2e-3}{} & \SI{1.1}{\milli\meter}\phantom{0} \\ \hline
\ILILC{} - Final Trial            & \SI{3.6e-4}{} & \SI{0.34}{\milli\meter} \\ \hline
\end{tabular}
\end{table}


\begin{figure}
    \centering
    \includegraphics[width=\columnwidth]{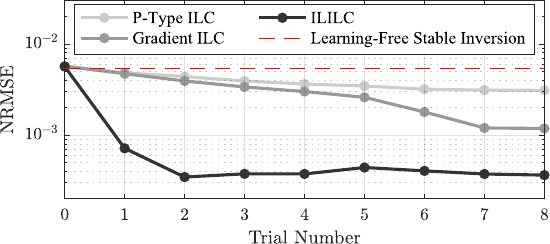}
    \caption{
    NRMSE of each ILC trial, illustrating convergence of \ILILC{} to a plateau determined by the noise injected to the system, and dramatically surpassing the convergence speed of both benchmark ILC techniques. The NRMSE of the learning-free stable inversion simulation is also pictured. It is 4\% smaller than the feedback-only simulation (\ILILC{} trial 0), but is much larger than the performance achievable with learning.
    }
    \label{fig:NRMSEA3}
\end{figure}

\begin{figure}
    \centering
    \includegraphics[width=\columnwidth]{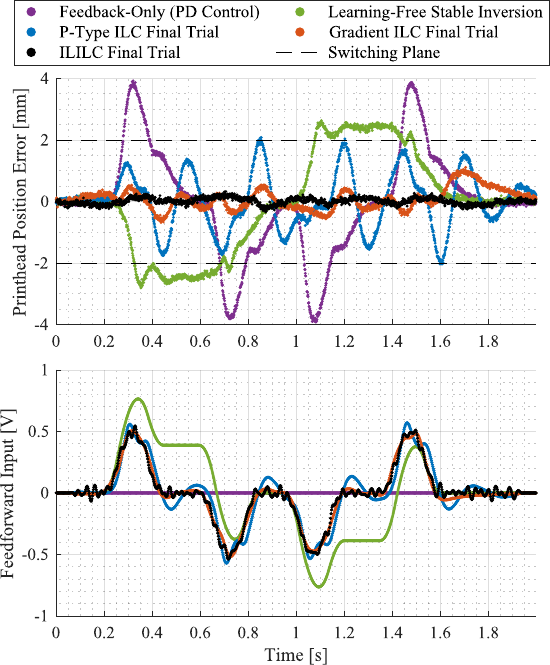}
    \caption{
    Error (top) and Input (bottom) time series data for the
    five simulations.
    Learning-free stable inversion performs as expected and yields benefits over feedback-only PD Control,
    but due to model error learning is required to reap the full benefit of feedforward control. \ILILC{} with stable inversion clearly yields the lowest-error performance. 
    This superiority is in spite of \ILILC{} acquiring more high frequency content via learning than the other ILC schemes, which appear less noisy but appear to contain higher amplitude, lower frequency oscillations that degrade performance.
    }
    \label{fig:timeseriesA3combined}
\end{figure}

The feedback-only PD controller has the worst performance. This is because PID control can only 
react
to preexisting errors 
while feedforward control can proactively prevent errors.
The model error is small enough that
learning-free stable inversion does yield some improvement 
(4\%) 
over the feedback-only NRMSE. 
(One
expects that the NRMSE of learning-free stable inversion would grow if the model error increased). Learning-free stable inversion also reduces the peak error magnitude, which is a critical safety criterion in many applications, by 
28\%.
However, because of the model error that does exist, all of the ILC schemes defeat learning-free stable inversion in both metrics. P-type ILC shows the least improvement, but still yields a 40\% reduction in NRMSE from learning-free stable inversion. This improvement is dwarfed by that of \ILILC{}, however, which yields a 70\% NRMSE improvement from gradient ILC (the next best technique), or equivalently, a 93\% improvement from learning-free stable inversion.

\ILILC's improvement over the 
benchmarks
is intuitive in light of the foundational theories of Newton's method (on which \ILILC{} is based) and gradient descent. 
Newton-Raphson root finding 
can achieve quadratic convergence while gradient descent is limited to a form of linear convergence.
More concretely, \cite{Avrachenkov1998} proves an upper bound for the convergence rate of the general ILC law 
$\uLift_{\tdx+1}=\uLift_\tdx+\learnMat_\tdx(\rLift-\yLift_\tdx)$
and that this bound decreases as the learning matrix $\learnMat_\ldx$ approaches the inverse of the true, unknowable system dynamics' Jacobian.
(Convergence rate is a ratio of ``next error'' over ``present error'', so the smaller the convergence rate, the faster the convergence).
\ILILC{} intentionally seeks this inversion via (\ref{eq:learnMat_ILILC}), while gradient ILC merely uses the model's Jacobian transpose in (\ref{eq:learnMat_grad}), which does \emph{not} approximate the inverse.
Figure \ref{fig:NRMSEA3} illustrates the resultant advantage in convergence rate of \ILILC{} in context of the printhead simulations.
Furthermore, \cite{Chu2013} gives theorems stating that gradient ILC's convergence plateaus for linear \NMP{} systems, so it is possible that a similar issue arises for \PWA{} systems.

In addition to 
the preceding quantitative analysis,
inspection of Figure \ref{fig:timeseriesA3combined} reveals a qualitative comparison worth making between the ILC schemes: that of noise acquisition. While even qualitatively, \ILILC{} clearly demonstrates the highest quality reference tracking, it also accumulates the most noise in its learned feedforward input. There may be applications in which this noise acquisition is unattractive. However, it must also be noted that gradient ILC and P-type ILC are not free from unwanted frequency content. P-type ILC especially appears to develop mid-frequency oscillations that substantially degrade performance. Gradient ILC is more subtle; its main source of error appears to be oscillations arising from overshoot of the feedforward input.
Of course, learning-\emph{free} stable inversion 
(representing contributions \ref{contribution:inversion}-\ref{contribution:SI} of this article)
is completely free of noise accumulation because it is generated purely from a dynamic model.
Noise observable in the output under learning-free stable inversion comes exclusively from the sensitivity of the underlying system 
rather than the feedforward control itself.

In all, this study constitutes the first demonstration of ILC applied to any hybrid system with \NMP{} component dynamics. 
Stable inversion is a required part of the highest performing ILC scheme used, validating this article's theoretical contributions' utility for high performance control.

\section{Conclusion}
\label{sec:conc5}
This article has derived theory for the inversion of a class of \PWA{} systems.
This includes the implicit inverse formula for systems of any relative degree and explicit formulas for multiple cases in which they exist.
Specifically, proof of sufficient conditions for inverse explicitness, and the explicit formulas themselves, are given for systems with global dynamical relative degree (a concept introduced in this work) of 0, 1, or 2.
Additionally, for cases in which the inverse system is unstable, a stable inversion procedure is created, along with proof of the sufficient conditions for the procedure to be applicable.
The ability to analytically produce inverse system models for hybrid systems has multiple applications in controls. Demonstrated here is the newfound ability to apply ILC to \PWA{} systems with unstable inverses to achieve low error output reference tracking.

There are many avenues for future work
to overcome limitations in the proposed methods.
Of particular interest is the relaxation of the constraints on relative degree. For some hybrid systems, it may be desirable to have locations in which the input cannot affect the output, and the state is governed by natural dynamics alone. In such cases the global dynamical relative degree would be undefined (infinite), which is not considered here. There may also be more cases in which different locations feature different component relative degrees, which would violate \ref{C:muc}. Relaxing these constraints would dramatically expand the class of systems addressed. 
Extension to input-based switching 
(which may take inspiration from the likes of \cite{Hejri2021}) 
and multi-input-multi-output systems
would also be significant contributions.
A future contribution regarding stable inversion could be a method to treat non-hyperbolic systems (e.g. those in which (\ref{eq:decouplable}) has an eigenvalue equal to 1 $\forall k$), which are precluded by \ref{C:decouplable} in the present article.

Finally, while the present work makes preliminary testing and observation of noise robustness in the ILC of \PWA{} systems, analytical derivation of 
performance
bounds under trial-varying noise is an open problem. Future work in this space may take inspiration from prior art analyzing robustness to trial-invariant disturbances for the broad set of nonlinear systems under which \PWA{} systems fall \cite{Avrachenkov1998}, and robustness to noise for ILC of purely linear systems \cite{Hu2022,DeRoover1996,Kavli1992}.

\bibliographystyle{IEEEtran}
\bibliography{spiegelBibURL}

\appendix
\subsection{Example System Details \& Derivation}
\label{sec:A1}
System identification of the printer yields a discrete-time 
LTI
truth model.
Truncation-based model order reduction by 1 (MATLAB function \texttt{balred}), zero-order-hold-based sample period reduction by a factor of 2 (MATLAB function \texttt{d2d}), and random perturbation of model parameters yields a new LTI model, which is used 
as the control 
model and can be represented by the state space system ($\Acm^{P}$, $\Bcm^{P}$, $\Ccm^{P}$, $\Dcm^{P}$) with state vector $\xcm^{P}\kdx\in\real^{n_{\xcm^{P}}}$.
To account for the change in sample period, truth 
model
outputs
are decimated by a factor of 2 
before 
delivery
to the ILC law 
and ILC law 
outputs
are upsampled 
by 
2
(zero order hold)
before being applied to the truth model.
A Bode plot of the experimental data, truth model, and control model are given in Figure \ref{fig:bode}.

\begin{figure}
    \centering
    \includegraphics[scale=1]{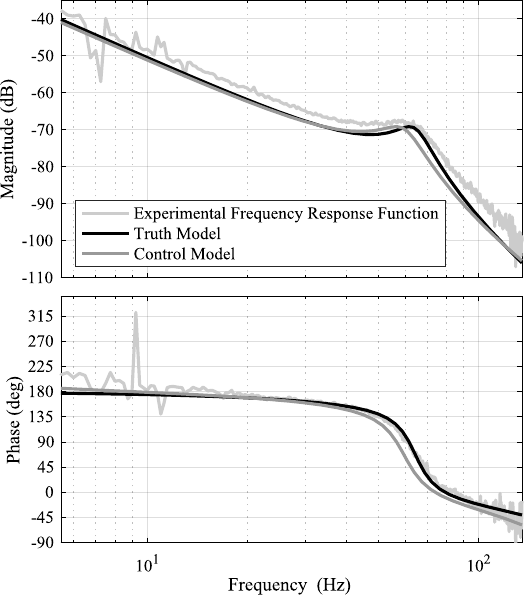}
    \caption{Bode plot of the experimental plant data, truth model of the plant, and control model of the plant.}
    \label{fig:bode}
\end{figure}

The feedback controller is of identical structure and tuning for the truth and control models, but has different parameter values due to the difference in sample period between the two models. 
The
feedback controller is composed of a second order lowpass filter given by the discrete-time transfer function
\begin{equation}
\label{eq:LP}
    C^{\textrm{LP}}(z) = \frac{b z(z+1)}{z^2+a_1z+a_2}
\end{equation}
in series with the 
switching
PD 
controller
{\allowdisplaybreaks\begin{IEEEeqnarray}{RL}
    C^{\textrm{PD}}(z)
    &= 
    \frac{
    \left(K_p + \frac{K_d}{T_s}\right)z - \frac{K_d}{T_s}
    }{z}
\end{IEEEeqnarray}}
Gain
$K_p$ is set to a high value when 
reference-output error 
exceeds a
magnitude threshold, and 
to a lower value otherwise.
Switching
is
based on
the error of the previous time step rather than the current time step in order for switching to be state-based, and thus satisfy \ref{C:stateswitch}. Mathematically this switching is made state-based by augmenting the minimal state-space representation of $C^{\textrm{PD}}(z)C^{\textrm{LP}}(z)$ with an extra state that stores the error input to the lowpass filter, yielding (\ref{eq:feedbackSS}).
For both the truth model and 
control model, the lowpass filter has a roll off frequency of 
\SI{40}{\hertz} 
and a damping 
ratio
of 0.7.

\subsection{ILC Filtering \& Tuning}
\label{sec:A2}
Adding filters to ILC is common practice (see, e.g. \cite{Bristow2006}). Two filters are used in this article. First, a zero-phase-shift version of the feedback controller's lowpass filter is applied to the input and output of the ILC law.
Filtering the ILC law output governs the weighting of frequency ranges on which learning is performed. Lowpass filtering reduces accumulation of content beyond the filter bandwidth (e.g. noise) into the feedforward input \cite{Bristow2006,DeRoover1996,Kavli1992}.
In lifted form, the feedback controller's filter is given by the lower diagonal, square, Toeplitz matrix $\mathscr{F}$ whose first column is the unit magnitude impulse response of the lowpass filter (\ref{eq:LP}) on $k\in\llbracket0,N-\dmu\rrbracket$. The zero phase shift is achieved by first filtering the signals forwards in time, and then backwards in time. The resultant lifted zero-phase-shift lowpass filter is
\begin{equation}
    \mathscr{Q} = \cancel{I}\mathscr{F}\cancel{I}\mathscr{F} 
\end{equation}
where $\cancel{I}$ is 
the exchange matrix (ones on the antidiagonal, zeros elsewhere).
Second, to eliminate time series edge effects the first 35 and last 35 samples of the 1000 sample ILC law output are forced to zero. These edge effects may arise because the finite stable inversion trajectories have nearly zero---rather than zero---initial conditions for the unstable modes and similar for the terminal conditions of the stable modes. In lifted form this filter is given by the identity matrix with the first 35 and last 35 diagonal elements set to zero, notated as $\mathscr{E}\in\real^{N-\dmu+1\times N-\dmu+1}$.

For gradient ILC and P-type ILC, tuning of the learning gains is done to achieve the most aggressive 
controller possible. Specifically $\gradILCgain$ and $\PILCgain$ are chosen as the largest whole number such that the NRMSE decreases monotonically over all trials. 
These numbers are found via a line search over many ILC 
simulations,
varying only the learning gains. 
By this method, $\gradILCgain=4255$ (dimensionless) and $\PILCgain=\SI{27}{\volt\per\meter}$. If the learning gains are increased above these values, the benchmark ILC schemes begin to exhibit instability.

\begin{IEEEbiography}[
{\includegraphics[width=1in,height=1.25in,clip,keepaspectratio]{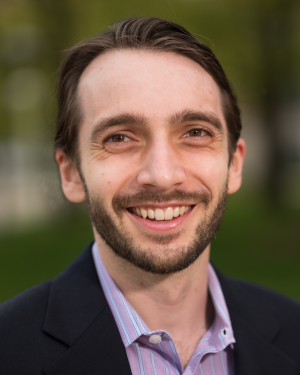}}
]{Isaac A. Spiegel}
received the B.S. degree 
(with honors)
in mechanical engineering and the 
Minor degree in Japanese 
from the University of California 
at Berkeley, 
Berkeley, CA, USA, in 2014. He received the 
M.S. and 
Ph.D. degrees in mechanical engineering in 
2019 and 
2021 from the University of Michigan, Ann Arbor, MI, 
USA, where he 
continues to contribute as a postdoctoral researcher. He is also a guidance navigation and control engineer at Terran Orbital.
His research focuses on developing hybrid modeling frameworks and optimization-based control theory that leverages those frameworks to address challenges encountered in practice with manufacturing, motion-control, and spacecraft systems.
\end{IEEEbiography}

\begin{IEEEbiography}[
{\includegraphics[width=1in,height=1.25in,clip,keepaspectratio]{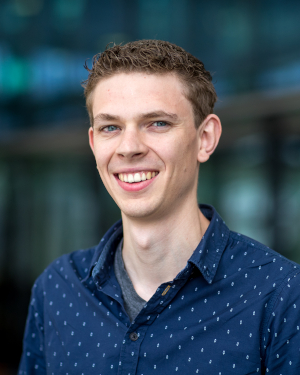}}
]{Nard Strijbosch}
received the B.Sc. (cum laude), M.Sc. (cum laude), and Ph.D. degrees from the Eindhoven University of Technology, Eindhoven, The Netherlands. He is a recipient of the IEEJ Industry Applications Society Excellent Presentation Award (SAMCON 2018). His research interest is in the field of motion control and learning control techniques for applications in mechatronic systems.
\end{IEEEbiography}

\begin{IEEEbiography}[
{\includegraphics[width=1in,height=1.25in,clip,keepaspectratio]{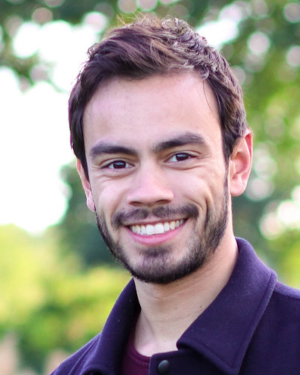}}
]{Robin de Rozario}
received the M.Sc. degree (cum laude) in mechanical engineering and the Ph.D. degree with the Control Systems Technology Group at the Eindhoven University of Technology, Eindhoven, the Netherlands, in 2015 and 2020, respectively. His research interests include system identification and learning control for high-performance motion systems.
\end{IEEEbiography}

\begin{IEEEbiography}[
{\includegraphics[width=1in,height=1.25in,clip,keepaspectratio]{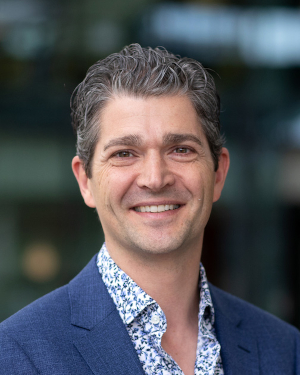}}
]{Tom Oomen}
received the M.Sc. (cum laude) and Ph.D. degrees from the Eindhoven University of Technology, Eindhoven, The Netherlands (TU/e). He is currently a full professor 
with the Department of Mechanical Engineering 
at TU/e and a part-time full professor with the Delft University of Technology. He held visiting positions at KTH, Stockholm, Sweden, and at The University of Newcastle, Australia. He is a recipient of the 7th Grand Nagamori Award, the Corus Young Talent Graduation Award, the IFAC 2019 TC 4.2 Mechatronics Young Research Award, the 2015 IEEE Transactions on Control Systems Technology (TCST) Outstanding Paper Award, the 2017 IFAC Mechatronics Best Paper Award, the 2019 IEEJ Journal of Industry Applications Best Paper Award, and Veni and Vidi personal grants. He is a Senior Editor of IEEE Control Systems Letters (L-CSS) and Associate Editor of IFAC Mechatronics, and he has served on the editorial boards of the IEEE Control Systems Letters (L-CSS) and IEEE TCST. He has also been vice-chair for IFAC TC 4.2 and a member of the Eindhoven Young Academy of Engineering. His research interests are in the field of data-driven modeling, learning, and control, with applications in precision mechatronics.
\end{IEEEbiography}

\begin{IEEEbiography}[
{\includegraphics[width=1in,height=1.25in,clip,keepaspectratio]{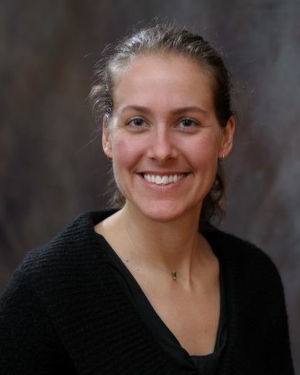}}
]{Kira Barton}
received the B.Sc. degree in mechanical engineering from the University of Colorado at Boulder (2001), and the M.Sc. and Ph.D. degrees in mechanical engineering from the University of Illinois at Urbana-Champaign (UIUC), in 2006 and 2010, respectively. She was a postdoctoral researcher at UIUC from Fall 2010 until Fall 2011, at which point she joined the Mechanical Engineering Department at the University of Michigan, Ann Arbor. She is currently an Associate Professor with the Department of Mechanical Engineering and a Core Robotics Faculty Member at the University of Michigan. She conducts research in modelling, sensing, and control for applications in advanced manufacturing and robotics, with a specialisation in cooperative learning and smart manufacturing. She was selected as a Miller Faculty Scholar from 2018-2021, and was a recipient of an NSF CAREER Award, in 2014, the 2015 SME Outstanding Young Manufacturing Engineer Award, the 2015 University of Illinois, Department of Mechanical Science and Engineering Outstanding Young Alumni Award, the 2016 University of Michigan, Department of Mechanical Engineering Department Achievement Award, and the 2017 ASME Dynamic Systems and Control Young Investigator Award.
\end{IEEEbiography}

\end{document}